\title{Scalable Pattern Matching in Computation Graphs}
\author{Luca Mondada
\institute{University of Oxford\\Oxford, UK}
\institute{Quantinuum Ltd\\Cambridge, UK}
\email{luca.mondada@cs.ox.ac.uk}
\and
Pablo {Andrés-Martínez}
\institute{Quantinuum Ltd\\Cambridge, UK}
}
\tikzstyle{deg2-gate}=[fill=white, draw=black, shape=rectangle, minimum width=1.2cm, minimum height=1cm]
\tikzstyle{deg2-gate-grey}=[fill={rgb,255: red,216; green,216; blue,216}, draw=black, shape=rectangle, minimum width=1.2cm, minimum height=1cm]
\tikzstyle{deg1-gate}=[fill=white, draw=black, shape=rectangle, minimum width=1.2cm, minimum height=4mm]
\tikzstyle{deg1-gate-half}=[fill=white, draw=black, shape=rectangle, minimum width=6mm, minimum height=5mm]
\tikzstyle{red-rect}=[fill={rgb,255: red,255; green,184; blue,185}, draw={rgb,255: red,245; green,22; blue,99}, shape=rectangle, minimum width=1.2cm, minimum height=4mm]
\tikzstyle{purple-rect}=[fill=white, draw={rgb,255: red,129; green,89; blue,159}, shape=rectangle, minimum width=5cm, minimum height=6mm]
\tikzstyle{salmon-rect}=[fill=white, draw={rgb,255: red,222; green,179; blue,171}, shape=rectangle, minimum width=5cm, minimum height=6mm]
\tikzstyle{big-transparent}=[fill=white, draw=white, shape=rectangle, minimum width=8cm, minimum height=3cm, opacity=0.4]
\tikzstyle{deg3-gate}=[fill=white, draw=black, shape=rectangle, minimum height=15mm, minimum width=12mm]
\tikzstyle{deg2-gate-no-width}=[fill=white, draw=black, shape=rectangle, minimum height=5mm]
\tikzstyle{deg1-gate-grey}=[fill={rgb,255: red,216; green,216; blue,216}, draw=black, shape=rectangle, minimum height=4mm, minimum width=1.2cm]
\tikzstyle{red-rect-big}=[fill={rgb,255: red,255; green,184; blue,185}, draw={rgb,255: red,245; green,22; blue,99}, shape=rectangle, minimum width=1.2cm, minimum height=1cm]
\tikzstyle{edge1}=[-, draw={rgb,255: red,245; green,22; blue,99}, thick]
\tikzstyle{edge2}=[-, draw={rgb,255: red,8; green,35; blue,133}, thick]
\tikzstyle{edge3}=[-, draw={rgb,255: red,9; green,197; blue,6}, thick]
\tikzstyle{black}=[-, thick]
\tikzstyle{arrow}=[->, thick]
\tikzstyle{dashredZ}=[-, draw={rgb,255: red,85; green,85; blue,85}, dashed, very thick]
\tikzstyle{dottedzz}=[-, dotted, thick]
\tikzstyle{new edge style 0}=[->, draw={rgb,255: red,85; green,85; blue,85}, dotted, thick]
\theoremstyle{remark}
\theoremstyle{definition}
\theoremstyle{plain}
\newtheorem{prop}{Proposition}
\theoremstyle{plain}
\theoremstyle{plain}
\newtheorem{coro}[prop]{Corollary}
\crefname{prop}{proposition}{propositions}
\let\commentfullflexible\lst@column@fullflexible
\renewcommand{\paragraph}{%
  \@startsection{paragraph}{4}%
  {\z@}{2ex \@plus 1ex \@minus .2ex}{-1em}%
  {\normalfont\normalsize\bfseries}%
}
\setlist[description]{
  font=\normalfont\itshape, 
  labelindent=\parindent, 
  leftmargin=* 
}
\begin{document}

\maketitle

\begin{abstract}
  Graph rewriting is a popular tool for the optimisation and modification
  of graph expressions in domains such as compilers, machine learning and
  quantum computing.
  The underlying data structures are often port graphs---graphs with labels at edge endpoints.
  A pre-requisite for graph rewriting is the ability to find graph patterns.
  We propose a new solution to pattern matching in port graphs. Its novelty
  lies in the use of a pre-computed data structure that makes
  the pattern matching runtime complexity independent of the number of patterns.
  This offers a significant advantage over existing solutions for use cases
  with large sets of small patterns.

  Our approach is particularly well-suited for quantum superoptimisation.
  We provide an implementation and benchmarks showing that our algorithm offers a 20x speedup
  over current implementations on a dataset of \num{10000} real world patterns
  describing quantum circuits.
\end{abstract}

\section{Introduction}
Optimisation of computation graphs is a long-standing problem in computer science
that is seeing renewed interest in the compiler~\cite{mlir},
machine learning (ML)~\cite{taso,computationgraph}
and quantum computing communities~\cite{quartz,qeso}.
In all of these domains, graphs encode computations that are either expensive to execute or that are
evaluated repeatedly over many iterations,
making graph optimisation a primary concern.

Domain-specific heuristics are the most common approach in compiler
optimisations~\cite{pytorch,TKET}---
a more flexible alternative are optimisation engines based on \emph{rewrite rules},
describing the allowable graph transformations~\cite{bonchiI,bonchiII}.
Given a computation graph as input,
we find a sequence of rewrite rules that transform the input
into a computation graph with minimal cost.
One successful approach in both ML and quantum computing has been to use automatically generated rules, scaling to using hundreds and even thousands of rules~\cite{quartz,taso,qeso}.

In the implementations cited above, pattern matching is carried out separately for each pattern, becoming a bottleneck for large rule sets.
We present an algorithm for pattern matching on computation graphs that uses a pre-computed
data structure to return all pattern matches in a single query.
The set of rewrite rules are directly encoded in this data structure:
after a one-time cost for construction,
pattern matching queries can be answered in running
time independent of the number of rewrite rules.

We provide a novel solution to pattern matching on port graphs~\cite{portgraph}
with a runtime complexity independent of the number of patterns.
As a trade-off, the runtime may be exponential
in the size of the patterns.
For pattern sizes of practical interest in quantum computing, however,
the resulting costs are manageable:
the exponential scaling is in the number of qubits
of the patterns, which is bounded by a single digit constant in relevant rewriting use cases~\cite{quartz}.

The solution we propose can be seen as an adaptation of Rete networks~\cite{reteforgy} to the special case of port graph pattern matching.
The additional structure obtained from restricting our considerations to graphs results in a simplified network design and crucially, 
allows us to derive worst-case asymptotic runtime bounds---overcoming a key limitation of Rete.
A similar problem is also studied in the context
of multiple-query optimisation for database queries~\cite{sellis,mqsubiso}, but has limited itself to developing caching
strategies and search heuristics for specific use cases.
Finally, using a pre-compiled data structure for pattern matching
was already proposed in~\cite{messmer}.
However, with a $n^{\Theta(m)}$ space complexity---%
$n$ is the input size and $m$ the pattern size---%
it is a poor candidate for pattern matching on large graphs,
even for small patterns.

\section{Paper overview}

Taking advantage of \emph{port labels} on graph data structures
leads to a speedup for pattern matching over the general case~\cite{Jiang2}. 
Port labels are data assigned to every endpoint of the edges of a graph,
such that the labels at every vertex are unique.
Such labels can for instance be assigned to processes with distinguishable
inputs and outputs:
a function that maps inputs $(x_1, \dots, x_n)$ to output $(y_1, \dots, y_m)$
can assign labels $i_1$ to $i_n$ and $o_1$ to $o_m$
to its incident edges in the computation graph.
The resulting data structure is a \emph{port graph}~\cite{portgraph}.


\paragraph{Main idea.}
For a set of $\ell$ pattern port graphs $P_1, \dots, P_\ell$ and a subject port graph $G$,
we solve the problem of finding all pattern embeddings $P_i \to G$.
We distinguish two separate stages during pattern matching:
\begin{description}
    \item[Pre-computation stage.] Compile the set of patterns into a data structure designed to speed up later queries.
    In the process, we select for every pattern $P_i$ an \emph{anchor set}, i.e. a subset $X_i$ of vertices in $P_i$.
    The input port graph $G$ is not required at this stage and, hence, this computation is done only once for a given collection of patterns $P_1, \dots, P_\ell$.
    \item[Fast pattern matching stage.] Given $G$, compute for each $P_i$
    all embeddings $P_i \to G$. This is
    achieved by enumerating all possible images $X$ in $G$ of pattern anchor sets $X_1, \dots, X_\ell$ and
    finding the subset of patterns $i$ for which the map $X_i \to X$
    can be extended to a valid pattern embedding of $P_i$ in $G$.
\end{description}
The pattern matching stage can be decomposed into
known problems by showing that embeddings
with fixed anchor sets can be equivalently seen as rooted tree embeddings.
The enumeration of valid choices of $X$ in $G$ then becomes a tree counting argument.
Meanwhile, the set of patterns that embed in $G$ for a fixed $X_i \to X$ is obtained from a pre-computed decision tree.
An overview is presented in \cref{fig:intro}.


\begin{figure}[t]
\resizebox{\textwidth}{!}{\tikzfig{intro-fig}}
\caption{Pattern matching on a port graph is reduced to the
problem of matching on trees.
A subset of vertices are chosen as anchor sets (left).
A neighbourhood of the anchors is extracted and represented as a tree (middle).
Finally, pattern matches are found by searching for matching subtrees (right).}
\label{fig:intro}
\end{figure}

    
\paragraph{Results and contributions.}
Our first major contribution is a pattern matching algorithm for port graphs 
with a runtime complexity bound independent of the number of patterns being matched, achieved
using a one-off pre-computation.
The main complexity result is expressed in terms of maximal pattern
\emph{width} $w$ and \emph{depth} $d$, two measures of pattern
size defined in \cref{subsec:pg}.
These are directly related to the tree representation illustrated in \cref{fig:intro}:
width is equal to the size of the anchor set (\cref{prop:cananchors})
and depth is at most twice the tree height (\cref{eq:ctsize}).
We assume bounded degree graphs (the complete list of assumptions is given in \cref{sec:assumptions})
and we use
the \emph{graph size} $|G|$ to refer to the number of vertices in $G$.

\begin{restatable}{thm}{mainthm}\label{thm:main}
    Let $P_1, \dots, P_\ell$ be patterns with at most width $w$
    and depth $d$.
    The pre-computation runs in time and space complexity
    \[
    O \left( (d\cdot \ell)^w \cdot \ell + \ell \cdot w^2 \cdot d \right).
    \]
    For any subject port graph $G$, the pre-computed prefix tree can be used
    to find all
    pattern embeddings $P_i \to G$ in time
    \begin{equation}\label{eq:mainruncompl}
    O \left( |G| \cdot \frac{c^w}{w^{\sfrac{1}{2}}} \cdot d \right)
    \end{equation}
    where $c = 6.75$ is a constant.
\end{restatable}

\noindent
The runtime complexity is dominated by an exponential scaling in maximal pattern width $w$.
Meanwhile, the advantage of our approach over matching one pattern at a time grows with the number of patterns $\ell$.
It is thus of particular interest for matching numerous small width patterns.

We illustrate this point by comparing our approach
to a standard algorithm that matches one pattern
at a time~\cite{Jiang2}, with 
runtime complexity $O(\ell \cdot |P| \cdot |G|)$.
Using $|P| \leq w\cdot d$ (shown in \cref{subsec:pg}) and
comparing to eq. (1), we thus
have a speedup in the regime $\Theta(c^w / w^{\sfrac32}) < \ell$.
On the other hand, $\ell$ is upper bounded
by the maximum number $N_{w, d}$ of patterns of bounded width and depth.
Using a crude lower-bound for $N_{w,d}$ derived in \cref{app:proofellbound},
we obtain a computational advantage for our approach when
\begin{equation}\label{eq:regime}
    \Theta\left(\frac{c^w}{w^{\sfrac32}}\right) < \ell < \left(\frac{w}{2e}\right)^{\Theta(w d)} \leq N_{w, d}.
\end{equation}
Our second major contribution is an efficient Rust library for port graph pattern matching\footnote{portmatching: \url{https://github.com/lmondada/portmatching}}.
We present benchmarks on a real world dataset of \num{10000} quantum circuits in \cref{sec:impl}, showing
a 20$\times$ speedup over a leading C++ implementation of pattern matching for quantum circuits.

\section{Preliminaries}\label{sec:defs}

\subsection{Definitions}\label{subsec:pg}

A port graph $G$ is a tuple $G := (V,E,\mathcal{P},\lambda)$ where $(V,E)$ is an undirected multigraph,
$\mathcal{P}$ is a finite set of \emph{port labels} and
$\lambda \colon V \times \mathcal{P} \rightharpoonup E \cup \{ \omega \}$
is a partial function that on its domain of definition
either assigns port labels to edges or marks them as open ports
using a specially dedicated symbol $\omega$.
The port graph is valid if $\lambda(v, p) = \lambda(v, p') = e \in E$ if and only if $e$ is an edge incident in $v$
and $p = p'$.
We then say that $e$ is attached to $v$ at port $p$.
The domain of definition of $\lambda(v, \cdot)$ is the set of port labels at vertex $v$, written $ports(v)$.
The degree of $v$ is $deg(v) = |ports(v)|$.
It will often be convenient to leave the definition of $\lambda$ implicit and 
for an edge $e = \lambda(v, p) = \lambda(v', p')$, to write it instead as the set $e = \{(v, p), (v', p')\}$.
Additionally, we may consider port graphs with labelled vertices, determined by \textit{vertex label maps} $V \to \mathcal{W}$,
where $\mathcal{W}$ is a set of labels.

\paragraph{Width, depth and linear paths.}
Fix a partition of port labels $\mathcal{P}$ into pairs of elements (and an additional singleton set if $|\mathcal{P}|$ is odd).
This defines an equivalence relation $\sim$ where $p \sim p'$ if $p$ and $p'$
are in the same partition.
The relation $\sim$ defines a partition of the edges of a port graph into paths. 
A linear path in a port graph $G$ with edges $E$ and port labels
in $\mathcal P$
is a path $P \subseteq E^\ast$ such that for every vertex $v$ in $G$ and ports $p, p' \in \mathcal{P}$
satisfying $p \sim p'$,
\begin{equation}\label{eq:linpath}
  \lambda(v, p) \in P \quad\implies\quad \lambda(v, p') \in P \cup \{\omega\}.
\end{equation}
From a single edge in $G$, a linear path can be constructed uniquely by repeatedly
appending edges to the path so that (\ref{eq:linpath}) is satisfied.
The linear path decomposition of $G$ is the unique partition of the edges of $G$ into linear paths.
The width $width(G)$ of $G$ is the number of linear paths in this decomposition.
The depth $depth(G)$ of $G$ is the length of the longest linear path in $G$.
Every edge is on exactly one linear path, while vertices may be on zero, one or several linear paths.
We have always $|G| \leq width(G) \cdot depth(G)$.

\paragraph{Patterns and embeddings.}
A pattern is a connected port graph. A pattern embedding (or just embedding) $\varphi: P \to G$
from a pattern $P = (V_P, E_P, \mathcal{P}, \lambda_P)$
to a subject port graph $G = (V, E, \mathcal{P}, \lambda)$, 
both with identical port label sets,
is given by an injective vertex map $\varphi_V: V_P \to V$
such that
$\lambda_P(v, p)$ is defined if and only if $\lambda(\varphi_V(v), p)$ is defined
and the edge map $\varphi_E: E_P \to E$ defined as
\begin{equation}\label{eq:emb}
\varphi_E(e) = 
\lambda(\varphi_V(v), p)\quad\textrm{for }(v, p) \in V \times \mathcal P\ \textrm{ s.t. }\ \lambda_P(v, p) = e\\
\end{equation}
is well-defined and injective.
Finally, if the pattern and port graphs have node labels $V_P \to \mathcal{W}$
and $V \to \mathcal{W}$, then we also require that pattern
embeddings preserve those.

\subsection{Simplifying assumptions}\label{sec:assumptions}
We list here a series of assumptions made throughout our argument.
They represent a restriction from the most general case but we do not
find that they restrict the usefulness of the result in practice.
We show in \cref{sec:quantumcirc} that these assumptions hold in the case of quantum circuits.
Moreover, as discussed in \cref{sec:impl}, none of these assumptions
are required for the implementation, and we have
not observed any impact on performance when lifting them in practice,
so we conjecture that these assumptions can be loosened and our results generalised.
\begin{enumerate}
    \item All graphs and patterns are of bounded maximum degree $\Delta$.
    \item There is no linear path that forms a cycle.
    \item All pattern embeddings $\varphi: P \to G$ must be \emph{convex}, i.e. for every subgraph $H \subseteq G$
          that contains the image of $P$, $\varphi(P) \subseteq H$, it holds that $width(P) \leq width(H).$
\end{enumerate}
Note also that \cref{eq:emb} requires that the degree
of a vertex $v$ in $P$ is also preserved $deg(v) = deg(\varphi(v))$.
Importantly, a pattern embedding may
map a vertex with open port $p$ to a vertex
in the subject graph that has an edge attached to port $p$.

We will further simplify the problem by making choices of presentation that do not imply any loss of generality.
First of all, we will assume that all vertices are on at most two linear paths (and thus in particular $\Delta = 4$).
Vertices on $k > 2$ linear paths can always be broken up into a composition of $k-1$ vertices,
each on two linear paths as follows:
\ctikzfig{breakup} 
This transformation leaves graph width unchanged but may multiply the
graph depth by up to a factor $\Delta$.
We can then fix the set of port labels to the set $\mathcal{P} = \{i_1, i_2, o_1, o_2\}$
with a total order $\leq$ on $\mathcal{P}$\footnote{Any total order will work, e.g. $i_1 \leq i_2 \leq o_1 \leq o_2$.}.
We fix the partition of $\mathcal{P}$ into pairs of elements given by
$i_k \sim o_k$ for $k \in \{1, 2\}$.
We also enforce that at every vertex $v$, the set of
ports $ports(v)$ is partitioned by $\sim$ into $\lfloor ports(v) / 2\rfloor$ pairs of elements and at most one singleton set.
This can always be achieved by relabelling vertex ports.
Finally, we assume that all patterns have the same width $w$ and depth $d$,
are connected port graphs and have at least 2 vertices.

Using these assumptions we can obtain the following notable bound on graph width.
\begin{restatable}{prop}{flatgraphwidth}\label{prop:flatgraphwidth}
    Let $G$ be a port graph with $n_\textrm{odd}$ vertices of odd degree
    and $n_\omega$ open ports.
    Then the graph width of $G$ is at most $\lfloor(n_\textrm{odd} + n_\omega) / 2\rfloor$.
\end{restatable}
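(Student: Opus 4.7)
The plan is to bound the number of linear paths by counting the potential ``endpoint slots'' that such paths must consume.

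\textbf{Step 1 (endpoints are well-defined).} I would first observe that since $G$ is flat, by definition no linear path is a cycle, so every linear path in the decomposition is a simple sequence of edges with exactly two distinct endpoint incidences $(v,p)$ and $(v',p')$.

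\textbf{Step 2 (classify where endpoints may lie).} Suppose a linear path $P$ terminates at vertex $v$ through the edge $e = \lambda(v,p)$. I claim that either (a) $p$ is a singleton in the partition at $v$, which can only happen when $\deg(v)$ is odd, or (b) $p$ is paired with some $p'$ satisfying $\lambda(v,p') = \omega$. Indeed, if neither held, then $p \sim_v p'$ with $\lambda(v,p') \in E$, and condition~(\ref{eq:linpath}) combined with $\lambda(v,p) = e \in P$ would force $\lambda(v,p') \in P$ as well. That edge would extend the path beyond $v$, contradicting termination. In case (b) I associate the endpoint with the open port $(v,p')$.

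\textbf{Step 3 (counting).} Each odd-degree vertex contributes exactly one singleton port to the partition, so there are exactly $n_{\textrm{odd}}$ singleton ports in total. Together with the $n_\omega$ open ports, this gives at most $n_{\textrm{odd}} + n_\omega$ distinct endpoint slots. Each singleton port $(v,p)$ is incident to a unique edge $\lambda(v,p)$ belonging to a unique linear path, and each open port $(v,p')$ determines at most one linear path (the one containing the edge at its paired port, if that port is not also open). So each slot is used by at most one path endpoint. Since each linear path consumes two endpoint slots, the number of linear paths---and hence $width(G)$---is at most $\lfloor (n_{\textrm{odd}} + n_\omega)/2 \rfloor$.

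The only subtle point is the endpoint classification in Step~2, in particular handling the case where the paired port is open rather than a singleton; after that, the argument is a straightforward double-counting. Edge cases such as a pair of ports both of which are open contribute ``wasted'' slots that never appear as endpoints, but this only slackens the inequality and does not affect the upper bound.
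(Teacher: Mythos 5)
Your proof is correct and follows essentially the same route as the paper's: both classify linear-path endpoints at a vertex $v$ as occurring either at a singleton port (forcing $\deg(v)$ odd) or at a port paired with an open port, then double-count endpoints against the $n_{\textrm{odd}} + n_\omega$ available slots. Your Step~1 observation that flatness guarantees two distinct endpoints per path, and your remark about wasted slots from open--open pairs, are slightly more explicit than the paper's write-up but do not change the argument.
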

\noindent
The proof is in the \cref{app:flatgraphwidth}.

\paragraph{Rooted paths ordering.}
The total order on $\mathcal{P}$ also induces a total order on the paths $e_1\cdots e_k \in E^\ast$
in $G$ that start in the same vertex $r \in e_1$:
the paths are equivalently described by a string in $\mathcal{P}^\ast$, the sequence of ports 
of $e_1, \dots, e_k$,
which we order using the lexicographical ordering on strings.
For every vertex $v$ in $G$ there is thus a unique smallest path from $r$ to $v$ in $G$ that
is invariant under isomoprhism of the underlying graph (i.e. relabelling of the vertex set).

\subsection{Quantum Circuits}\label{sec:quantumcirc}
We see pattern matching for quantum circuits as one of the
main applications of our results.
We therefore choose to introduce here the quantum circuit syntax as a motivation and illustration
of the port graph formalism.
Similar data structures are also in use in other parts of compilation science, variously referred to as
circuits, computation graphs or dataflow graphs.
Readers familiar with port graphs or uninterested in the application in quantum computing may skip directly to the
definitions of \cref{subsec:pg}.

The set of operations in a quantum circuit is called the \emph{gate set} of the computation and forms
the set of node weights of the port graph.
To every element of the gate set, called a \emph{gate type}, is associated a gate arity.
A gate with gate type of arity $n$ has $n$ incoming port labels $i_k$
and $n$ outgoing port labels $o_k$---by our assumptions we thus assume $1 \leq n \leq 2$.
Edges always connect outgoing to incoming labels, and the directed port graph that results
from these edge orientations is acyclic.
A quantum circuit has $q$ qubits if it has
$q$ outgoing and $q$ incoming open ports: the inputs and outputs of the circuit.

All assumptions made in \cref{sec:assumptions} can be easily verified for
quantum circuits and in fact will also apply to most computation graphs
more generally. Indeed:
\begin{enumerate}
    \item  Bounded degree is a direct consequence of a having a fixed set of gate types.
    \item For any directed acyclic computation, using port labels in $i_k$
for incoming ports and $o_k$ for outgoing labels will always
result in non-cyclic linear paths.
    \item Similarly, convexity is a natural restriction in the context
of rewriting systems for acyclic digraphs, as it ensures that the application 
of a rewrite rule does not introduce a cycle in the graph.
\end{enumerate}
Using the $i_k \sim o_k$ port label partition, linear paths in a quantum circuit correspond to the paths of gates along a single qubit.
For applications to quantum circuits, we can thus bound graph width and depth as follows:
\begin{restatable}{prop}{quantumwidth}\label{prop:quantumwidth}
    The port graph $G$ of a quantum circuit with $q$ qubit and at most $d$ gates on any one qubit
    has width $q$ and depth $d$.\qed
\end{restatable}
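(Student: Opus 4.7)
The plan is to identify linear paths with the ``qubit wires'' of the circuit and then read off the three claims directly from this identification.

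First I would unpack the definition of a linear path under the pairing $i_k \sim o_k$. If a linear path $P$ contains the edge attached at an incoming port $i_k$ of a gate $v$, then by condition (\ref{eq:linpath}) it must also contain the edge attached at the outgoing port $o_k$ of $v$, and conversely. Thus, whenever a linear path enters a gate on its $k$-th incoming port, it exits on the $k$-th outgoing port. Inductively, starting from an incoming open port of the circuit on qubit $k$, the linear path is forced to thread through every gate acting on qubit $k$ in the order they appear, and to terminate at the outgoing open port of qubit $k$. Hence each of the $q$ qubits yields exactly one linear path, and distinct qubits yield edge-disjoint paths, so the linear path decomposition has cardinality exactly $q$, giving $\mathrm{width}(G) = q$. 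As a sanity check, \cref{prop:flatgraphwidth} also gives the upper bound: every internal gate of arity $n$ has even degree $2n$, so $n_{\mathrm{odd}} = 0$, while $n_\omega = 2q$, yielding $\mathrm{width}(G) \le q$.

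For depth, since we have now described the linear paths explicitly as qubit wires, the length of the linear path corresponding to qubit $k$ is the number of gates acting on qubit $k$. The longest such path therefore has length equal to $\max_k (\text{gates on qubit } k) \le d$, with equality achievable when some qubit realises the maximum, so $\mathrm{depth}(G) = d$. Finally, for flatness, each linear path starts at an input open port and ends at an output open port, so none of them is a cycle; this uses implicitly that the underlying directed graph is acyclic (a standing assumption on circuits), which rules out a linear path closing on itself midway.

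The only subtle step is the first one, namely verifying that the condition (\ref{eq:linpath}) on linear paths, together with the specific pairing $i_k \sim o_k$, forces a linear path to follow a single qubit. Once this is made precise, both the width and depth claims reduce to counting qubits and gates per qubit respectively, and flatness is immediate from acyclicity.
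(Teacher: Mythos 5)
Your proposal is correct and takes the same route the paper has in mind: the paper explicitly states that ``linear paths in the port graph of a circuit then correspond to the paths of gates along a single qubit'' and then declares the proposition immediate, marking it with a $\qed$ directly in the statement. Your elaboration of why the pairing $i_k \sim o_k$ together with condition~(\ref{eq:linpath}) forces a linear path to track a single qubit wire, the resulting width and depth counts, and the flatness argument from acyclicity of the circuit fill in precisely the details the paper omits.
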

\noindent
Some further considerations on applying our work to quantum circuits are discussed in \cref{app:qc-pg}.

\subsection{Pattern Matching}
In our pattern matching task, given a subject port graph $G$ and a collection of port graphs $P_1, \dots, P_\ell$, we must find all pattern embeddings
\begin{equation}\label{eq:patternmatches}
\{\varphi: P_i \to G\ | \ 1 \leq i \leq \ell \text{ s.t.\ } \varphi \text{ is a pattern embedding}\}.
\end{equation}
\noindent
Finding pattern embeddings in port graphs is a simple problem
already studied in other contexts~\cite{Jiang1,Jiang2}.
As a result of \cref{eq:emb}, for every vertex $r$ in $P$ and $r_G$ in $G$ there can be at most one embedding $P \to G$
that maps $r$ to $r_G$:
for $v \neq r$ in $P$, there is a path in $P$ from $r$ to $v$ which, viewed as a sequence of port labels,
maps uniquely to a path in $G$ starting at $r_G$ and ending in the image of $v$.
For a choice of $r$ in $P$ it is therefore sufficient to consider every posible image $r_G$ in $G$ to find all
embeddings $P \to G$.

We are however interested in the regime where the number of patterns $\ell$ may be large and
pattern matching is performed many times for the same set of patterns.
For this scenario it makes sense to proceed in two steps 
and introduce
\emph{pattern matching with pre-computation}.
Given patterns
$P_1, \dots P_\ell$, we first produce a \emph{pattern matcher}, a program whose
representation can be stored to disk.
In a second step, a subject port graph $G$
is passed
to the pattern matcher, which
computes the set (\ref{eq:patternmatches}).
We are interested in two properties of the solution:
\begin{itemize}
    \item What is the complexity of pattern matching on input $G$ given such a pattern matcher?
    \item What is the time complexity of generating a pattern matcher given patterns,
    and what is the size of the pattern matcher that is produced?
\end{itemize}
The answer to the first question is our main concern: for a fixed collection of patterns, 
this will determine the runtime to obtain all pattern embeddings given an input
diagram. The second question, on the other hand, primarily concerns a one-off
pre-computation step that only needs to be performed once for any set of
patterns.
In practice, this may also impinge on the first question, if the matcher
does not fit into RAM and/or CPU cache.



\section{Algorithm description}\label{sec:toy}
\subsection{Canonical Tree Representation}\label{sec:ctrepr}
Connected port graphs admit an equivalent representation as trees, which we will
use for matching.


\paragraph{Split Graphs.}
Let $G$ be a connected port graph with vertices $V$ and
consider the linear path decomposition of $G$. In this decomposition every vertex $v$ in $G$
must be on one or more linear paths.
We mark a subset  $X \subseteq V$  of vertices of $G$ as `immutable' and split
every other vertex $v \in V \setminus X$
into multiple vertices, rewiring the edges in such a way that all vertices not in $X$ are now on
exactly one linear path.
We call the graph thus obtained the $X$-split graph of $G$, and write it $\textit{split}_X(G)$.
\Cref{fig:splitgraph} shows an example of a graph and its split graph.
\begin{figure}
\ctikzfig{split-graph} 
\caption{A port graph and its linear path decomposition (coloured edges) on the %
left. On the right, the split graph resulting from the choice of anchors $X = \{A, D\}$. %
We use the circuits convention, i.e. port labels are partitioned into linear paths %
using the relation $i_k \sim o_k$.}  
\label{fig:splitgraph}
\end{figure}
Formally, the split graph can be characterised using an equivalence relation $\equiv$
given by
\begin{equation}\label{eq:equiv-split}
  (v, p) \equiv (v', p')\quad\Leftrightarrow\quad v = v' \wedge (v \in X \vee p \sim p')
\end{equation}
The vertices of the split graph are the equivalence classes of $\equiv$
and the edges are obtained by mapping the edges of $G$ one to one:
two vertices in the $X$-split graph are connected by an edge if and only if
there is an edge in $G$ between some elements of their equivalence classes.
Note that if the anchor set $X$ is too small, $\textit{split}_X(G)$ may not be connected;
e.g. if $X = \varnothing$ there would be $\textit{width}(G)$  connected components, one per linear path.

A recovery of the original port graph $G$ given split$_X(G)$ 
is made possible by adding vertex labels to the split graph that identify split vertices.
In the following, split graphs are always rooted trees,
i.e. connected acyclic graphs with a chosen root vertex.
Using paths of port labels we obtain a vertex labelling that is 
invariant under pattern embedding.
This is discussed in more details in the context of the CT representation
in \cref{app:ctrepr}.



\begin{figure}
\begin{lstlisting}[%
    morekeywords={function, for, if, not, while, or, in, return, null},
    % morecomment=[l]{//},
    caption={Finding the set of canonical anchors. %
    \textsc{CanonicalAnchors} is a convenience wrapper around \textsc{ConsumePath}, which is defined recursively. %
    The latter returns not only the anchor list, but also the updated set of seen linear paths.
    \texttt{G.linear\_paths(v)} is %
    the set of all linear paths in $G$ that go through vertex $v$. For traversal,
    a linear path \texttt{lp} is split into two paths starting at vertex $v$ using
    \texttt{lp.split\_at(v)}. The sequence of vertices starting from $v$ to the end of the
    path is represented as a queue. %
    The symbol \texttt{++} designates list concatenation.},
    label=lst:anchors
]
function $\textsc{CanonicalAnchors}$(G: $\textit{Graph}$, root: $\textit{Vertex}$) -> $\textit{List[Vertex]}$:
  # Initialise the variables for ConsumePath and return the anchors
  (anchors, seen_paths) = $\textsc{ConsumePath}$(G, [root], $\varnothing$):
  return anchors

function $\textsc{ConsumePath}$(
    G: $\textit{Graph}$,
    path: $\textit{Queue[Vertex]}$,
    seen_paths: $\textit{Set[LinearPath]}$,
) $\to$ $\textit{(List[Vertex], Set[LinearPath])}$:
  new_anchor = null
  unseen = $\varnothing$
  # Find the first vertex in the queue on an unseen linear path
  while unseen == $\varnothing$:
    if path.is_empty():
      return ([], {})
    new_anchor = path.pop()
    unseen = G.linear_paths(new_anchor) $\setminus$ seen_paths

  # Add the new linear paths to the set of seen paths
  seen_paths = seen_paths $\cup$ unseen

  # We traverse the rest of current path as well as all the new linear paths
  paths = [path]
  for lp in unseen:
    (left_path, right_path) = lp.split_at(new_anchor)
    paths.push(left_path)
    paths.push(right_path)

  # For each path find anchors recursively and update seen paths
  anchors = [new_anchor]
  for path in paths:
    (new_anchors, new_seen_paths) = $\textsc{ConsumePath}$(G, path, seen_paths)
    anchors = anchors ++ new_anchors
    seen_paths = new_seen_paths
  return (anchors, seen_paths)
\end{lstlisting}
\end{figure}

\paragraph{Anchor sets.} If $\textit{split}_X(G)$ is connected and acyclic,
we say that $X$ is an anchor
set of $G$ and call the vertices in $X$ anchor vertices.
If $width(G) > 1$, then
every linear path in $G$ must contain at least one anchor
vertex.
A set of $width(G)$ anchors always exists and can be computed constructively:
\begin{restatable}{prop}{cananchors}\label{prop:cananchors}
For a connected port graph $G$ of width $w$ and depth $d$ and
for a vertex $r$ of $G$,
\cref{lst:anchors} returns an anchor set of $w$ vertices;
we call this the set of \textit{canonical anchors}.
Its runtime is $O(w^2 \cdot d)$.
\end{restatable}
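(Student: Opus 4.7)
The proposition has three parts: the returned set $X$ is an anchor set (its split graph is connected), $|X| \leq w$, and the algorithm runs in time $O(w^2 \cdot d)$. My plan is to dispatch these in increasing order of difficulty, treating the connectivity claim as the new content and the other two as routine counting.

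For the size bound the argument is a direct count. Each time the while loop exits with a new anchor, the set \texttt{unseen} is nonempty by the loop exit condition, and its elements are added to \texttt{seen\_paths}. Hence every anchor that ever joins the output list is responsible for at least one previously-unseen linear path entering \texttt{seen\_paths}. Since $G$ has exactly $w$ linear paths in its linear path decomposition, the output can contain at most $w$ anchors.

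For connectivity of the $X$-split graph, my first step is a reformulation: because every non-anchor vertex of $G$ is split along its linear paths, two linear paths can only be bridged in the split graph through a shared anchor. Connectivity of the $X$-split graph is therefore equivalent to connectivity of the ``meta-graph'' whose nodes are the linear paths of $G$ and whose edges record pairs of linear paths sharing an $X$-anchor. I would then prove by induction on the order in which linear paths enter \texttt{seen\_paths} that the meta-graph restricted to seen paths is always connected: each newly-inserted linear path is introduced via an anchor $v$ that either is the root $r$ or also lies on a previously-seen linear path (since $v$ was popped from a queue originating in a seen half). To close the argument I must show that every linear path eventually enters \texttt{seen\_paths}. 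Here I would use connectivity of $G$: if some $lp^\ast$ remained unseen, I take a path in $G$ from $r$ reaching a vertex of $lp^\ast$ and look at the first edge on this path lying on an unseen linear path; its starting endpoint then lies on a seen linear path (and was thus processed by the algorithm) while having an unseen linear path through it, contradicting the exit condition of the while loop.

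For the runtime my plan is an amortised analysis charged per linear path. When a linear path $lp$ first becomes seen through an anchor $v$, it is split at $v$ and its two halves are pushed as queues. Subsequent anchors discovered while traversing those halves do not cause $lp$ to be retraversed; they only trigger pushes of halves of other, newly-discovered, linear paths. Hence the total number of vertex pops charged to $lp$ across all recursive calls is $O(|lp|) = O(d)$. Each pop computes a set difference \texttt{G.linear\_paths(v)}\,$\setminus$\,\texttt{seen\_paths}, costing $O(w)$ if \texttt{seen\_paths} is stored as a hash set and each vertex lies on at most $w$ linear paths. Summing over the $w$ linear paths, and absorbing the lower-order splitting cost (one split per anchor, at most $w$ anchors at $O(w)$ each), yields $O(w^2 \cdot d)$. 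The main obstacle I anticipate is the connectivity argument: the split-to-meta-graph reformulation needs to be stated carefully, and the exhaustiveness step relies on extracting from $G$'s connectivity a concrete vertex that witnesses the contradiction with the while loop invariant.
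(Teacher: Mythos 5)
Your proof is correct and uses essentially the same ideas as the paper: anchors charged to freshly-seen linear paths for the size bound, connectivity deduced by tracing the traversal implicit in \textsc{ConsumePath}, and runtime amortised per linear path at $O(w)$ per vertex pop. Your meta-graph reformulation and your explicit exhaustiveness argument (that every linear path eventually enters \texttt{seen\_paths}) are welcome clarifications of steps the paper handles more tersely---it asserts without detailed justification that every vertex of $G$ appears in some \texttt{path} queue and constructs the path to the root directly in the split graph rather than passing through the linear-path quotient---but this is a reorganisation of the same argument, not a different route.
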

\noindent
The proof is in \cref{app:cananchors}.
Note that the code given assumes that the linear paths of $G$ are already
computed. These can be computed at the beginning of the computation
in time linear in the graph size---and thus
do not affect the overall asymptotic complexity.
As a direct consequence we have the following:
\vspace{-4mm}
\begin{coro}\label{coro:width-anchors}
For a port graph $G$ and root $r_G$ in $G$:
$width(G) = |\textsc{CanonicalAnchors}(G, r_G)|$.\qed
\end{coro}

\paragraph*{CT representation.}
We call the tree split$_X(G)$ obtained from the canonical anchors, along with the choice of root $r$,
the \textbf{canonical tree} (CT) representation of $G$.
For simplicity, we will assume that the root is chosen such that it is on a single linear path\footnote{We can ensure that such a root always exists for example by adding dummy vertices on every edge.}.
Non-root internal nodes with more than one child are contained in the set of anchor vertices of $G$,
every leaf is at most a height $depth(G)$ below
the nearest anchor and there is at least one leaf
a distance $depth(G)$ from the nearest anchor.
As a consequence the CT tree width $t_w$ and height $t_h$ can be bound by 
\begin{equation}\label{eq:ctsize}
t_w \leq (\Delta - 2) \cdot width(G) = 2\cdot width(G)\quad\textrm{and}\quad depth(G) / 2 \leq t_h \leq width(G) \cdot depth(G),
\end{equation}
where $\Delta = 4$ is the maximum degree of $G$.
Using CT representations of patterns simplifies the pattern matching problem for three reasons:
\begin{itemize}
    \item Connected port graphs are uniquely characterised by their CT representation, i.e.
    there is an injective map from patterns with a choice of root to their CT representation.
    \item Every vertex in the CT representation is either the root vertex or it is uniquely identified by the path to it from the root. Paths can be defined by port labels,
    which are invariant under pattern embeddings.
    \item Rooted trees are uniquely characterised by a partition of their edges into paths, which can in turn be encoded as strings. See \cref{fig:ctstrings} for an example.
\end{itemize}
The properties of the CT representation is discussed in more details in \cref{app:ctrepr}.

\subsection{Pattern matching with fixed anchors}\label{sec:fixed-anchors}
We aim to present an $\ell$-independent pattern matcher: an algorithm that can identify all pattern embeddings in a subject graph~\eqref{eq:patternmatches} whose complexity is independent from the number of patterns $\ell$.
In this section, we restrict to pattern embeddings that map the set of canonical anchors of the pattern to a fixed subset of vertices in the subject graph, and we show that this problem can be reduced to a simple matching problem on strings.

We start by observing that such pattern embeddings with fixed anchors
correspond to tree inclusions of CT representations.
Let $G$ be a port graph, let $P_1, \dots, P_\ell$ be patterns of width $w$
and let $X \subseteq V$ be a set of $w$ vertices in $G$.
Choose root vertices $r_G \in X$
and $r_i$ in patterns $P_i$.
Write $T_i$ for the CT representation of $P_i$ rooted in $r_i$.
Consider the following set $\mathcal{G}$ of subgraphs of $G$:\footnote{A convex subgraph $H \subseteq G$ is one such the canonical embedding $H \to G$ is convex, as defined in assumption 3 of \cref{sec:assumptions}.}
\begin{equation}\label{eq:treeinc}
\mathcal{G} = \{ H \subseteq G\ | \ H \textrm{ is connected convex subgraph and }
\textsc{CanonicalAnchors}(H, r_G) = X\}.
\end{equation}

\begin{prop}\label{prop:treeinc}
If $\mathcal{G} \neq \varnothing$, then there is a connected subgraph $G_\text{max} \subseteq G$
such that $H \subseteq G_\text{max}$ for all $H \in \mathcal{G}$.
The split graph $\textit{split}_X(G_\text{max})$
is a tree rooted in $r_G$.
There is a pattern embedding $\varphi \colon P_i \to G$ mapping the canonical anchor set of $P_i$ to $X$ and $\varphi(r_i) = r_G$ if and only if there is an injective embedding of trees $\phi \colon T_i \to \textrm{split}_X(G_\text{max})$
with $\phi(r_i) = r_G$ that satisfies \cref{eq:emb} and  preserves vertex labels.
\end{prop}
\noindent
The proof gives an explicit construction for $G_\text{max}$.
\begin{proof}
Let $L_1, \dots, L_w$ be the subset of linear paths in $G$ that go through at least one vertex in $X$.
Let $G_{\text{max}} \subseteq G$ be the subgraph of $G$ defined by the edges
\begin{equation}\label{eq:Lsubgraph}
E_\text{max} = \bigcup_{1 \leq i \leq w} L_i,
\end{equation}
For any subgraph $H \in \mathcal{G}$ it holds that $H \subseteq G_\text{max}$ 
due to the linear paths of $H$ being contained in $L_1, \dots, L_w$.
Since any $H \in \mathcal{G}$ is connected, the anchors in $X$ are connected in $H$
and therefore also in $G_{\text{max}}$. As a consequence, all vertices of $G_\text{max}$ are connected.
The port graph split$_X(G_{\text{max}})$ must be a tree, as otherwise its canonical anchors
would be a strict subset of $X$ and by \cref{coro:width-anchors},
$width(\text{split}_X(G_\text{max})) < |X|$. Hence,
\[
width(G_\text{max}) = width(\text{split}_X(G_\text{max})) < |X| = width(H).
\]
contradicting the convexity assumption of \cref{eq:treeinc}.
Assuming $\mathcal{G} \neq \varnothing$,
we now prove the bidirectional implication between $\varphi$ and $\phi$.

$\Leftarrow$:
We use vertex labels on $T_i$ and split$_X(G_\text{max})$ to mark with a unique label
vertices that were split from a same vertex $v$ in $P_i$, respectively $G_\text{max}$
(details in \cref{app:ctrepr}).
Let $\mathcal{V}_{P_i}$ and $\mathcal{V}_{G_\text{max}}$
be the partition of the vertices of $T_i$ and split$_X(G_\text{max})$
into sets of split vertices with identical labels;
there are bijective maps between $\mathcal{V}_{P_i}$ and
the vertices in $P_i$ as well as between $\mathcal{V}_{G_\text{max}}$ and the vertices in $G_\text{max}.$
The tree embedding $\phi \colon T_i \to \text{split}_X(G_\text{max})$ preserves vertex labels
and thus maps sets in $\mathcal{V}_{P_i}$ to sets
in $\mathcal{V}_{G_\text{max}}$: it thus defines
a map $\varphi_V: P_i \to G_\text{max}$.

$\varphi_V$ is injective by injectivity of $\phi$ and maps the root $r_i$ to $r_G$ by construction.
The $w-1$ non-root anchor vertices
of $T_i$ are the
only vertices in $T_i$ on more than
one linear path: they must be mapped
to the $w-1$ vertices in split$_X(G_\text{max})$ with the same properties---precisely its non-root anchor vertices.
Edges are mapped bijectively by graph splitting and thus
the map $\varphi_E$ can be defined by
using $\phi_E$ and must satisfy
\cref{eq:emb}.
Since $G_\text{max} \subseteq G$, we conclude that $\varphi$ is a valid pattern embedding
$P_i \to G$.

$\Rightarrow$:
The image of $\varphi: P_i \to G$ is a convex connected subgraph of $G$ with canonical anchors
$X$ and root $r_G$, and thus is in $\mathcal{G}$.
It must in particular be a subgraph of $G_\text{max}$, and thus we can view $\varphi$
as an embedding $\varphi: P_i \to G_\text{max}$.

Note that edges are mapped bijectively between split and unsplit graphs,
and thus the pattern embedding $\varphi$ defines an injective map $\phi_E$
from edges in $T_i$ to edges in split$_X(G_\text{max})$.
We construct the map $\phi: T_i \to \text{split}_X(G_\text{max})$ inductively over the vertex set of $T_i$.
We start by defining $\phi(r) := r_G$.
Using \cref{eq:emb} and $\phi_E$,
we can then uniquely define the image of any neighbouring vertex of $r$ in $T_i$.
Proceeding inductively we will define $\phi$ on all vertices of $T_i$ since it is connected.
Because \cref{eq:emb} holds on $\varphi$, this procedure is well-defined and the resulting
map $\phi$ will also satisfy \cref{eq:emb}.

Now suppose $v, v'$ are vertices in $T_i$ such that $\phi(v) = \phi(v')$.
By the inductive construction there are paths from the root $r$ to $v$ and $v'$ respectively
such that their image under $\phi_E$ are two paths from $r_G$ to $\phi(v) = \phi(v')$.
But split$_X(G_\text{max})$ is a tree, so both paths must be equal. By bijectivity of $\phi_E$,
it follows $v = v'$, and thus $\phi$ is injective.
Finally, the vertex labels are defined to be invariant under pattern embedding and thus are
preserved by definition.\qedhere
\end{proof}
\noindent
Given $G$ and a vertex set $X$ we can thus find a maximal subgraph $G_\text{max} \subseteq G$
that contains all subgraphs of $G$ with canonical anchors $X$.
Given $P_1, \dots, P_\ell$, $X$ and $G$, we then
compute the CT representations of $P_1, \dots, P_\ell$
and check for inclusions within $\textrm{split}_X(G_\text{max})$.
We will use an $\ell$-independent tree matching algorithm for the latter task, thus solving the $\ell$-independent pattern
matching problem on port graphs.

\paragraph{String encoding of CT representations.}
We reduce the tree inclusion problem that results from \cref{prop:treeinc}
to a string prefix matching problem that admits a well-known solution, discussed in 
\cref{prop:prefixmatch} of \cref{app:prefixmatch}.
The main idea is to partition the edges of the CT representation into linear paths,
each of which is represented
by two strings.
They encode the paths from the anchor vertex on the path to either end of the linear path
by expressing them as sequences of port labels.
A graph of width $w$ will have $w$ linear paths and will be split into $2w$ strings.
For the example graph of \cref{fig:splitgraph}, we obtain six split linear paths,
shown in \cref{fig:ctstrings}.
See \cref{app:ctrepr} for more details.

\begin{figure}
    \centering
    \resizebox{0.7\textwidth}{!}{\tikzfig{ctstrings}}
    \caption{The split graph of \cref{fig:splitgraph}, represented by 
    6 strings obtained from its 3 linear paths.}
    \label{fig:ctstrings}
\end{figure}


This encoding defines the \textsc{AsStrings}$(T, r)$ procedure:
it takes as input a connected acyclic split graph $T$ and a root $r$ in $T$,
and returns an encoding of $T$ and its vertex labels as $2 \cdot width(T)$ strings.
\begin{restatable}{prop}{ctstrings}\label{prop:ctstrings}
    Let $T_1, T_2$ be acyclic connected split graphs of width $w$
    and let $r_1, r_2$ be vertices in $T_1$ resp. $T_2$.
    Let $(s_1, \dots, s_{2w}) = \textsc{AsStrings}(T_1, r_1)$ and $(t_1, \dots, t_{2w}) = \textsc{AsStrings}(T_2, r_2)$
    be the string encodings of their linear paths.
    Then there is an injective tree embedding $T_1 \to T_2$
    that satisfies \cref{eq:emb}, maps $r_1$ to $r_2$ and preserves vertex labels
    if and only if $s_i \subseteq t_i$ for all $1 \leq i \leq 2w$.
\end{restatable}
\noindent
The proof consists in showing that trees can be fully defined by the set of all paths
from the root, which can be encoded in and recovered from the string representation.
The proof is in \cref{app:ctstrings}.

The $\subseteq$ notations on string designates prefix inclusion.
The string prefix matching problem is a simple computational task that can be generalised
to to check for multiple string patterns at the same time.
An overview of this problem can be found in \cref{app:prefixmatch}.
Putting \cref{prop:treeinc,prop:ctstrings} together, we can thus obtain a solution
for the $\ell$-independent pattern matching problem for fixed anchors:
\begin{prop}\label{prop:fixedanchors}
Let $G$ be a port graph, $P_1, \dots, P_\ell$ be patterns of width $w$ and depth at most $d$, and
$X \subseteq V$ be a set of $w$ vertices in $G$.
The set of all pattern embeddings mapping the canonical anchor set of $P_i$ to $X$
and root $r_i$ to $r_G$ for $1 \leq i \leq \ell$
can be computed in time $O(w\cdot d)$ using a pre-computed prefix tree of size
at most $(\ell \cdot d + 1)^w$, 
constructed in time complexity $O((\ell \cdot d)^w)$.\qed
\end{prop}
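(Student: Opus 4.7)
The plan is to chain the two reductions established in \cref{prop:treeinc,prop:ctstrings} and then invoke the multi-dimensional prefix tree from \cref{prop:prefixmatch}. By \cref{prop:treeinc}, the embeddings of $P_i$ that map its canonical anchors to $X$ and its root $r_i$ to $r_G$ correspond bijectively to tree inclusions $T_i \subseteq T_\text{max}$, where $T_i$ is the CT representation of $P_i$ rooted at $r_i$ and $T_\text{max}$ is the CT representation, rooted at $r_G$, of the maximal subgraph of $G$ with canonical anchor set $X$. By \cref{prop:ctstrings}, these tree inclusions translate further to componentwise prefix inclusions between the $2w$-string encodings $\textsc{AsStrings}(P_i, \cdot, r_i)$ of the patterns and $\textsc{AsStrings}(G_\text{max}, X, r_G)$ of the subject. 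Thus the fixed-anchor pattern matching problem reduces to reporting all pre-stored $2w$-string tuples that are componentwise prefixes of a single query tuple.

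For the pre-computation I would build a multi-dimensional trie, as in \cref{prop:prefixmatch}, keyed along the $w$ linear paths of the patterns. Each pattern has depth at most $d$, so the two direction-strings attached to a single linear path share the anchor vertex and together contribute at most $d$ characters. Along one linear-path dimension the trie branches according to at most $\ell \cdot d + 1$ distinct prefixes produced by the $\ell$ patterns, and since the prefix choices across the $w$ linear paths are independent, the total number of trie nodes is bounded by $(\ell \cdot d + 1)^w$. The trie is built in time proportional to its size by inserting the patterns one by one.

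At query time, we compute $\textsc{AsStrings}(G_\text{max}, X, r_G)$, a tuple of $2w$ strings whose total length is $O(w \cdot d)$ (since $G_\text{max}$ consists of $w$ linear paths each of length at most $d$). A single top-down traversal of the trie driven by the query tuple enumerates all matching patterns in time $O(w\cdot d)$, independently of $\ell$. Combining the two reductions with this data structure yields the claimed bounds.

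The subtle step is obtaining the exponent $w$ rather than the naive $2w$ in the trie-size bound: treating the $2w$ strings as independent dimensions of length $d$ each would give $(\ell \cdot d + 1)^{2w}$. The improvement relies on pairing the two direction-strings of a linear path into a single effective dimension whose combined character budget is $d$, so that a single factor $\ell \cdot d + 1$ accounts for both halves of a linear path at once. This is the only nontrivial piece of bookkeeping in the argument; everything else is a direct application of \cref{prop:treeinc,prop:ctstrings,prop:prefixmatch}.
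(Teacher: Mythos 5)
Your high-level plan — chain \cref{prop:treeinc} and \cref{prop:ctstrings} to reduce to $2w$-dimensional prefix matching, then invoke \cref{prop:prefixmatch} — is exactly the route the paper intends (the paper states the proposition with \textqed, treating it as a direct consequence of those propositions). You also correctly spotted the real subtlety: a naive application of \cref{prop:prefixmatch} with $2w$ strings of length up to $d$ gives a bound with exponent $2w$, not the exponent $w$ stated in the proposition, and this discrepancy is not addressed anywhere in the paper.

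However, your proposed resolution does not hold up. You suggest ``pairing the two direction-strings of a linear path into a single effective dimension whose combined character budget is $d$,'' so that each linear path contributes a single factor $\ell\cdot d + 1$. The problem is that a pair of strings under componentwise prefix order is not linearly ordered and does not naturally form a one-dimensional trie dimension. There is no single string of length $\le d$ whose prefix lattice is isomorphic to the set of pairs $(p_1,p_2)$ with $p_1 \sqsubseteq s^{(1)}$, $p_2 \sqsubseteq s^{(2)}$: that lattice is a grid of size $(|s^{(1)}|+1)(|s^{(2)}|+1)$, which can be $\Theta(d^2)$, not $O(d)$. Concatenating the two strings with a separator also fails, since prefix containment of the concatenation does not coincide with componentwise prefix containment of the pair. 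So ``pairing'' alone does not buy the exponent reduction; you would still need a more careful counting argument — for example, a direct node count of the nested trie that exploits $|s^{(1)}_j|+|s^{(2)}_j|\le d$ for each pattern, together with an argument that the query traversal remains $O(w\cdot d)$ — to justify the $(\ell\cdot d+1)^w$ bound rather than $(\ell\cdot d+1)^{2w}$. As it stands, only the weaker exponent $2w$ follows from \cref{prop:prefixmatch}, and the step from $2w$ to $w$ is a genuine gap in your argument (and, to be fair, one the paper itself leaves implicit).
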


\subsection{Enumeration of anchor sets}\label{sec:anchors}
Assume that all patterns have at most width $w$ and depth $d$.
All that remains to turn \cref{prop:fixedanchors} into a complete
solution for pattern matching is to enumerate all possible sets $X$ of
at most $w$ vertices in $G$ that are the canonical anchors of some
subgraph of $G$ of width $w$.
The bound on the number of such sets (\cref{prop:catalanbound}) is one of the key stepping stones of this paper
that makes $\ell$-independent matching possible on port graphs.

\paragraph{Procedure.}
We introduce \textsc{AllAnchors}, a procedure similar to $\textsc{CanonicalAnchors}$ of \cref{lst:anchors},
described in \cref{lst:extract} in detail.
\textsc{AllAnchors} takes as input a port graph $G$, a root vertex $r_G$ and a width $w \geq 1$,
and returns all sets of $w$ vertices that form the canonical anchors of some subgraph of $G$
with CT representation rooted at $r_G$.
The main difference between \cref{lst:anchors,lst:extract} is that the successive
calls to \textsc{ConsumePath} on line 35 of \cref{lst:anchors} 
are replaced by a series of nested loops (lines 42--48 in \cref{lst:extract})
that exhaustively iterate over the possible outcomes for different subgraphs of $G$.
The results of every possible combination of recursive calls are then collected
into a list of anchor sets, which is returned.

\begin{figure}
\begin{lstlisting}[%
    morekeywords={function, for, if, not, assert, in, return, such, that, while, len, else},
    caption={%
      Returns all sets of $w$ vertices that form the canonical anchors of some subgraph of $G$
      with CT representation rooted at $r_G$.. The code structure mirrors \cref{lst:anchors},
      with
      \textsc{AllAnchors} and \textsc{AllConsumePath} replacing %
      \textsc{CanonicalAnchors} and \textsc{ConsumePath} respectively. %
      \texttt{lp.split\_at}, \texttt{G.linear\_paths} and \texttt{++} are defined as in \cref{lst:anchors}.},
    label=lst:extract
]
function $\textsc{AllAnchors}$(G:$\,\textit{Graph}$, root:$\,\textit{Vertex}$, w:$\,\textit{Integer}$) -> $\textit{List[List[Vertex]]})$:
  # Assumption: root is on a single linear path
  assert len(G.linear_paths(root)) == 1

  # Initialise the variables for AllConsumePath and return the anchor lists
  all_anchors = []
  for (anchors, seen_paths) in $\textsc{AllConsumePath}$(G, w, [root], $\varnothing$):
    all_anchors.push(anchors)
  return all_anchors

function $\textsc{AllConsumePath}$(
    G: $\textit{Graph}$,
    w: $\textit{Integer}$,
    path: $\textit{Queue[Vertex]}$,
    seen_paths: $\textit{Set[LinearPath]}$,
) $\to$ $\textit{List[(List[Vertex], Set[LinearPath])]}$:
  # Base case: return one empty anchor list
  if w == 0:
    return [[]]

  new_anchor = null
  unseen = $\varnothing$
  # Find the first vertex in the queue on an unseen linear path
  while unseen == $\varnothing$:
    if path.is_empty():
      return []
    new_anchor = path.pop()
    unseen = G.linear_paths(new_anchor) $\setminus$ seen_paths
  # Every vertex is on at most one unseen linear path as either
  #  - the new anchor is the root, in which case it is on at most one linear path
  #  - or it is on up to two linear paths, but one of them has already been seen.
  assert len(unseen) == 1
  new_path = unseen[0]

  # The w anchors are made of the new anchor and w-1 anchors on path1 - path3
  path1 = path
  path2, path3 = new_path.split_at(new_anchor)
  seen0 = seen_paths $\cup$ {new_path}
  return_list = []
  # Iterate over all ways to split w-1 anchors over the three paths
  # and solve recursively
  for 0 $\leq$ w1, w2, w3 < w such that w1 + w2 + w3 == w - 1:
    for (anchors1, seen1) in $\textsc{AllConsumePath}$(G, w1, path1, seen0):
      for (anchors2, seen2) in $\textsc{AllConsumePath}$(G, w2, path2, seen1):
        for (anchors3, seen3) in $\textsc{AllConsumePath}$(G, w3, path3, seen2):
          # Concatenate new anchor with anchors from all paths
          anchors = [new_anchor] ++ anchors1 ++ anchors2 ++ anchors3
          return_list.push(anchors, seen3)
  return return_list
\end{lstlisting}
\end{figure}

\begin{restatable}[Correctness of \textsc{AllAnchors}]{prop}{allanchors}\label{prop:allanchors}
    Let $G$ be a port graph and $H \subseteq G$ be a connected convex subgraph of $G$ of width $w$.
    Let $r$ be a vertex of $H$. We have
    $\textsc{CanonicalAnchors}(H, r) \in \textsc{AllAnchors}(G, r, w).$
\end{restatable}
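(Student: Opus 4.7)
The approach is a simulation between the deterministic execution of \textsc{CanonicalAnchors}$(H, r)$ and a carefully chosen branch of the non-deterministic \textsc{AllAnchors}$(G, r, w)$. The two algorithms share essentially the same recursive skeleton, differing only in that \textsc{AllConsumePath} enumerates over all $(w_1, w_2, w_3)$ budget splits while \textsc{ConsumePath} lets these counts be determined by its own recursion. The plan is to show that a distinguished branch of the enumeration reproduces the deterministic execution on $H$.

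I would proceed by structural induction on the call tree of \textsc{ConsumePath}$(H, [r], \varnothing)$. Associated with each node is the following invariant: for a call \textsc{ConsumePath}$(H, \mathit{path}, \mathit{seen}_H)$ that produces $k$ anchors, the matching call \textsc{AllConsumePath}$(G, k, \mathit{path}, \mathit{seen}_G)$ admits a branch returning the same $k$ anchors. Here $\mathit{seen}_G$ is the set of linear paths of $G$ that contain (edge-wise) the paths in $\mathit{seen}_H$; this lift is well-defined because $H \subseteq G$ inherits the port pairing $\sim_v$ at each vertex, so every linear path of $H$ extends uniquely to a linear path of $G$.

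The induction step requires that both algorithms pick the same vertex as the next anchor. The $H$-side stops at the first vertex in the queue with an unseen linear path in $H$; the $G$-side stops at the first vertex with an unseen linear path in $G$. These could a priori disagree: a vertex $v$ might be skipped by the $H$-side (all $H$-linear paths at $v$ are already in $\mathit{seen}_H$) while the $G$-side flags it as a new anchor because some linear path of $G$ through $v$ carries no edge of $H$. Ruling out this case is the main obstacle, and convexity of $H$ in $G$ is precisely the hypothesis brought to bear. The expected argument is by contradiction: given such a hypothetical extra $G$-linear path at $v$, one exhibits a subgraph $G' \subseteq G$ containing $H$ whose width is strictly less than $w$, contradicting convexity. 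The simplifying assumption that each vertex lies on at most two linear paths, together with the assumption that the root lies on a single linear path, is what makes this combinatorial argument local and manageable at each step of the recursion.

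Once the next-anchor correspondence is established, choose the split $(w_1, w_2, w_3)$ in \textsc{AllConsumePath} to match the numbers of anchors produced in the three sub-calls of \textsc{ConsumePath}$(H)$; these sum to $k - 1$ by construction. Applying the induction hypothesis to each sub-call yields matching sub-lists, and concatenation produces the full matching anchor list $[r, a_1, \dots, a_{w-1}]$. The base case is immediate when the queue is empty or the budget is zero, in which case both algorithms return the empty anchor list.
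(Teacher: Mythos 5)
You have correctly isolated the delicate point: \textsc{ConsumePath}$(H,\dots)$ and \textsc{AllConsumePath}$(G,\dots)$ pop vertices using \texttt{G.linear\_paths(v)} with different first arguments ($H$ versus $G$), so nothing a priori forces the two while loops to stop at the same \texttt{new\_anchor}. The paper's own induction over $w$ does not raise this issue; it simply applies the inductive hypothesis ``with the same values for \texttt{path} and \texttt{seen\_paths}'' and tacitly assumes that the anchor selection in the two loops coincides, and it never invokes the convexity hypothesis anywhere in the argument. In that sense your proof attempt is more careful than the one in the paper: you make the $\mathrm{seen}_H \to \mathrm{seen}_G$ lift explicit and you identify exactly the step that needs justification.

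Unfortunately the justification you propose does not go through, so there is a genuine gap. Convexity rules out the case where an extra $G$-linear path at $v$ reconnects to $H$ on the far side (there, adding the intervening $G$-edges merges two $H$-linear paths and strictly decreases width, contradicting convexity). It does \emph{not} rule out the ``dangling'' case: a $\sim_v$-pair of ports both open in $H$ whose $G$-linear path leaves $H$ and never returns. Adding those $G$-edges then creates a fresh linear path and \emph{increases} width, so every supergraph $G'\supseteq H$ still has $width(G') \ge width(H)$ and $H$ remains convex; yet the $G$-side flags $v$ while the $H$-side skips it. Concretely, let $H$ consist of gates $A,B,C$ in sequence on wire $1$ and a gate $E$ attached to $C$ on wire $3$, where $B$ is a two-qubit gate whose wire-$2$ ports are both $\omega$ in $H$ but $B$'s wire-$2$ output is an edge to some $D\notin H$ in $G$. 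Then $H$ is connected, has width $2$, and is convex, but $\textsc{CanonicalAnchors}(H,A)=[A,C]$ while the deterministic while loop inside $\textsc{AllConsumePath}(G,1,[B,C],\{q_1\})$ stops at $B$, so the branch producing $[A,C]$ never appears. The property your argument actually needs --- every $G$-linear path through a vertex of $H$ carries at least one edge of $H$ --- is strictly stronger than convexity as defined, and you would either need to assume it directly, prove it holds in the intended (e.g.\ quantum-circuit) regime, or change the argument, for instance by exploiting the freedom in choosing the root among all $|G|$ positions that the main algorithm enumerates.
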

\noindent
The proof is by induction over the width $w$ of the subgraph $H$ and given in \cref{app:allanchors}.
The idea is to map every recursive call to \textsc{ConsumePath} in \cref{lst:anchors}
to a call to \textsc{AllConsumePath} in lines 42--48 of \cref{lst:extract}.
All recursive results are concatenated on line 47, and thus the value returned by \textsc{ConsumePath}
will be one of the anchor sets in the list returned by \textsc{AllConsumePath}.

We will see that the overall runtime complexity of \textsc{AllAnchors} can be easily
derived from a bound on the size of the returned list. 
For this we use the following result:
\begin{prop}\label{prop:catalanbound}
    For a port graph $G$ and vertex $r_G$ in $G$,
    the length of the list $\textsc{AllAnchors}(G, r_G, w)$ is in
    $O(c^w \cdot w^{-\sfrac32})$, where $c = 6.75$ is a constant.
\end{prop}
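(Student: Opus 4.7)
The plan is to set up a recurrence for the length of the list returned by \textsc{AllConsumePath}, identify it as the classical ternary (Fuss--Catalan) recurrence, and then invoke Stirling's approximation to extract the $(27/4)^w / w^{3/2}$ asymptotic.

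First I would define $T(w)$ to be the supremum, over all valid triples $(G, \textrm{path}, \textrm{seen\_paths})$ as inputs and over all choices of \textsc{AllAnchors}' initial call, of the length of the list returned by $\textsc{AllConsumePath}(G, w, \textrm{path}, \textrm{seen\_paths})$. The base case is immediate: when $w = 0$ the procedure returns the singleton list $[[]]$ (line 19), so $T(0) = 1$. For $w \geq 1$, I would inspect the remainder of the routine. If the \texttt{while} loop of lines 24--28 exits with an empty queue the output is the empty list, contributing nothing. Otherwise, exactly one new anchor is chosen and the rest of the code consists of the triple nested loop (lines 42--48), which iterates over all triples $(w_1, w_2, w_3)$ of non-negative integers with $w_1 + w_2 + w_3 = w - 1$ and pushes at most one entry per combined outcome of the three recursive calls. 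Therefore
\[
T(w) \;\leq\; \sum_{\substack{w_1,w_2,w_3 \geq 0 \\ w_1+w_2+w_3 = w-1}} T(w_1)\, T(w_2)\, T(w_3).
\]

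Second, I would recognise this recurrence, together with the initial condition $T(0)=1$, as the defining recursion of the order-$3$ Fuss--Catalan numbers $F_w = \frac{1}{2w+1}\binom{3w}{w}$, equivalently the counts of rooted ternary trees with $w$ internal nodes. A short induction on $w$, using the same splitting $(w_1, w_2, w_3)$ of $w-1$, shows $T(w) \leq F_w$. Since the length of the list returned by $\textsc{AllAnchors}(G, r_G, w)$ is bounded by $T(w)$ (the outer call of \textsc{AllAnchors} is just a thin wrapper around \textsc{AllConsumePath}), this transfers directly to the quantity we want to bound.

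Finally, I would apply Stirling's approximation to $\binom{3w}{w}$ to obtain $F_w = \Theta\!\bigl((27/4)^w \, w^{-3/2}\bigr)$, noting that $27/4 = 6.75 = c$. This yields $|\textsc{AllAnchors}(G, r_G, w)| = O(c^w\, w^{-3/2})$, as claimed. I do not expect a genuine obstacle here: the only point requiring care is verifying that the bound on $T(w)$ holds uniformly over all inputs, in particular that the additional constraint on line 32 (\texttt{assert len(unseen) == 1}), and the reliance on the simplifying assumptions introduced at the start of \cref{sec:anchors} (root on a single linear path; every non-root vertex on at most two linear paths), really do reduce every branching at a new anchor to three sub-queues and not more. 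Once this structural check is in place, the combinatorial identification with ternary trees and the Stirling estimate are standard.
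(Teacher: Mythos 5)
Your proof is correct and takes essentially the same approach as the paper: both set up the recurrence $T(w) \leq \sum_{w_1+w_2+w_3=w-1} T(w_1)T(w_2)T(w_3)$ from the triple nested loop, identify the solution with the order-3 Fuss--Catalan numbers $\frac{1}{2w+1}\binom{3w}{w}$ counting ternary trees, and extract the $(27/4)^w w^{-3/2}$ asymptotic via Stirling. Your closing remark about verifying that the simplifying assumptions justify the assertion on line 32 (and hence the three-way branching) is a point the paper treats only implicitly, so it is a useful thing to have flagged explicitly.
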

    
\begin{proof}
Let $C_w$ be an upper bound for the length of the list returned by
a call to \textsc{AllConsumePath} for width $w$,
and thus a bound on the length of the list returned by \textsc{AllAnchors}.
For the base case $w = 0$, $C_0 = 1$.
The returned \texttt{all\_anchors} list is obtained by pushing anchor lists
one by one on line 48.
We can count the number of times this line is executed by multiplying
the length of the lists returned by the recursive calls on lines 43--45, giving
us the recursion relation
\begin{equation}\label{eq:catalanrec}
  C_w \leq \sum_{\substack{0 \leq w_1, w_2, w_3 < w\\w_1 + w_2 + w_3 = w - 1}} C_{w_1} \cdot C_{w_2} \cdot C_{w_3}.
\end{equation}
Since $C_w$ is meant to be an upper bound, we replace $\leq$ with equality in \cref{eq:catalanrec} to obtain a recurrence relation for $C_w$.
This recurrence relation is a generalisation of the well-known
Catalan numbers~\cite{catalan}, equivalent to counting the number of ternary trees with $w$ internal nodes:
a ternary tree with $w \geq 1$ internal nodes is made of a root along with
three subtrees with $w_1,w_2$ and $w_3$ internal nodes respectively, with
$w_1 + w_2 + w_3 = w-1$.
A closed form solution to this problem can be found in~\cite{fusscatalan}:
\[
C_w = \frac{{3w \choose w}}{2w + 1} = \Theta \left(\frac{c^w}{w^{\sfrac32}} \right)
\]
satisfies \cref{eq:catalanrec} with equality,
where $c = \sfrac{27}{4} = 6.75$ is a constant obtained from the Stirling approximation:
\[
{3w \choose w} = \frac{(3w)!}{(2w)!w!} = \Theta\left(\frac{1}{\sqrt{w}}\right)
\Big(\frac{(3w)^3}{e^3}\Big)^{w}\Big(\frac{e^2}{(2w)^2}\Big)^{w}\Big(\frac{e}{w}\Big)^{w}
= \Theta\left(\frac{(\sfrac{27}{4})^w}{w^{\sfrac12}}\right).\qedhere
\]
\end{proof}
\noindent
To obtain a runtime bound for \textsc{AllAnchors}, it is useful to identify how much of $G$ needs to be traversed.
If we suppose all patterns have at most depth $d$, then it immediately follows that any
vertex in $G$ that is in the image of a pattern embedding must be at most a distance $d$ away
from an anchor vertex.
For this purpose, we modify the definition of \texttt{split\_at} in \cref{lst:extract} to only return the first $d$ vertices of any path returned.
We thus obtain the following runtime.
\begin{restatable}{coro}{allanchorscoro}\label{prop:allanchorscoro}
For patterns with at most width $w$ and depth $d$,
the total runtime of \textsc{AllAnchors} is in
\begin{equation}\label{eq:catalan}
    O\left(\frac{c^w \cdot d}{w^{\sfrac12}}\right).
\end{equation}
\end{restatable}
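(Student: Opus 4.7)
The plan is to derive the runtime directly from the output-size bound of \cref{prop:catalanbound} via a per-anchor charging argument. \cref{prop:catalanbound} states that $\textsc{AllAnchors}(G, r_G, w)$ returns $O(c^w / w^{\sfrac{3}{2}})$ anchor lists of length exactly $w$, so the total number of anchor entries emitted across all outputs is $O(c^w / w^{\sfrac{1}{2}})$. The strategy is to show that each such entry can be charged with $O(d)$ units of work, which by multiplication gives the claimed total runtime of $O(c^w \cdot d / w^{\sfrac{1}{2}})$.

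First, I would argue that each output anchor entry is produced by exactly one successful execution of the while loop on lines 24--28 of \cref{lst:extract}. With the modified \texttt{split\_at} that restricts returned queues to their first $d$ vertices (as introduced in the paragraph preceding the statement), every queue passed to $\textsc{AllConsumePath}$ has length at most $d$, so the while loop terminates in $O(d)$ steps before finding a new anchor. The remaining per-call bookkeeping---computing \texttt{G.linear\_paths(new\_anchor)}, splitting the newly-discovered linear path on line 36, and updating the set of seen paths---is also $O(d)$ per anchor. The concatenation of lists on line 47 costs $O(w)$ per output list and, distributed across the $w$ entries of that list, amortizes to $O(1)$ per anchor.

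The main subtlety, and the one I expect to be the chief obstacle, is accounting for recursive calls that produce no output---either via the early exit \texttt{return []} on line 26 when the path is exhausted before $w$ reaches zero, or as later calls in the triple-nested for loops of lines 42--48 whose sibling calls have already returned empty lists. Such unsuccessful calls perform $O(d)$ work that cannot be charged directly to an output anchor. My plan to resolve this is to observe that each parent call spawns at most $O(w^2)$ unsuccessful children across its loops over splits $(w_1,w_2,w_3)$, and absorb these into the $O(d)$ charge already assigned to the parent's own anchor (which always exists when the unsuccessful calls occur). Combining this with the contributions from the successful calls then yields the bound $O(c^w / w^{\sfrac{1}{2}}) \cdot O(d) = O(c^w \cdot d / w^{\sfrac{1}{2}})$ as claimed in \cref{eq:catalan}.
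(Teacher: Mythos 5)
Your high-level strategy---accounting the runtime by charging work to output anchor entries, of which there are $O(c^w/w^{\sfrac{1}{2}})$ by \cref{prop:catalanbound}---is a reasonable one, but the way you propose to handle unsuccessful calls does not go through, and this is exactly where the paper's proof takes a different and more careful route.

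The first concrete problem is the claim that ``each parent call spawns at most $O(w^2)$ unsuccessful children.'' The quantity $O(w^2)$ only bounds the number of $(w_1,w_2,w_3)$ splits on line 42. But for each split, the nested \texttt{for} loops on lines 43--45 iterate over the \emph{lists returned by the recursive calls}: the $w_2$-call is re-issued once for every element returned by the $w_1$-call, and the $w_3$-call once for every combination of elements from the $w_1$- and $w_2$-calls. The number of children (successful or not) a single parent spawns is therefore governed by the sizes of these intermediate result lists, not by the number of splits, and can be as large as $\Theta(C_w)$, exponentially larger than $w^2$. The second problem is that even if the $O(w^2)$ bound held, the proposed absorption does not preserve the charge: folding $O(w^2)$ unsuccessful children of cost $O(d)$ each into the parent's charge yields a per-anchor cost of $O(w^2 d)$, not $O(d)$, which would give a final bound of $O(c^w \cdot w^{\sfrac{3}{2}} \cdot d)$---too large by a factor of $w^2$.

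The paper avoids this accounting trap entirely. Rather than charging per output entry and worrying about dead-end work, it considers the recursion tree of \textsc{AllConsumePath} and bounds the \emph{total work along a single root-to-leaf path} (i.e.\ a single call stack) by $O(w \cdot d)$: along any chain of nested calls, the set \texttt{seen\_paths} grows monotonically, and the vertices popped from the \texttt{path} queues are all distinct, so the cumulative \texttt{while}-loop iterations and \texttt{split\_at} traversals are bounded by $w$ linear paths times $d$ vertices each. Multiplying by $C_w$, an upper bound on the number of leaves, gives $O(C_w \cdot w \cdot d) = O(c^w d/w^{\sfrac{1}{2}})$ directly. The key idea you are missing is precisely this change of accounting unit: from output entries (which do not cleanly cover the work of failed branches) to root-to-leaf call stacks (which do cover all work, and whose individual cost is bounded uniformly). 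If you want to repair your charging scheme instead, you would need to show that the \emph{total} number of unsuccessful calls over the whole run is $O(C_w)$ (which can be done, e.g.\ via a generating-function argument bounding the size of the full recursion tree), rather than trying to bound them locally per parent.
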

\noindent
The proof is in \cref{app:allanchorscoro}.
Finally, we reach our main result.
\mainthm
\begin{proof}
The pre-computation consists of running the \textsc{CanonicalAnchors} procedure on
every pattern and then transforming them into string tuples using \textsc{AsStrings}.
\textsc{AsStrings} is linear in pattern sizes and \textsc{CanonicalAnchors} runs in $O(w^2\cdot d)$ for each pattern (\cref{prop:cananchors}).
This is followed by the insertion of $\ell$ tuples of $2w$ strings of length $\Theta(d)$
into a multidimensional prefix tree. This dominates the total runtime, which can be obtained
directly from \cref{prop:prefixmatch}.

The complexity of pattern matching itself on the other hand is composed of two parts:
the computation of all anchor set candidates, and the execution of
the prefix string matcher for each of the trees resulting from these sets of fixed anchors.
The complexity of the former is obtained by
multiplying the result of \cref{prop:catalanbound} with $|G|$,
as \textsc{AllAnchors}
must be run for every choice of root vertex $r$ in $G$:
\begin{equation}\label{eq:finalcomplexity}
  O(w \cdot d \cdot C_w \cdot |G|),
\end{equation}
where $C_w$ is the bound for the number of anchor lists returned by \textsc{AllAnchors}.
For the latter we use \cref{prop:prefixmatch} and obtain the complexity
$O(w \cdot d \cdot C_w)$, which is dominated by \cref{eq:finalcomplexity}.
\end{proof}

\section{Pattern matching in practice}\label{sec:impl}
\begin{figure}[t]
    \centering
    \includegraphics[width=\textwidth]{imgs/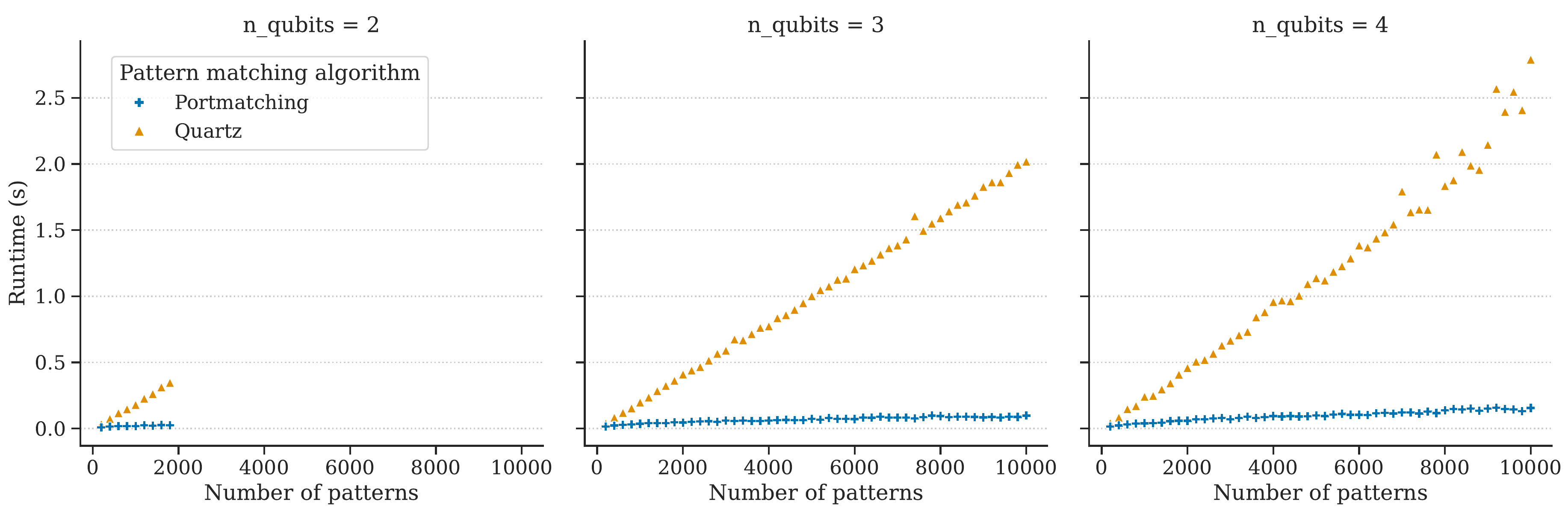}
    \caption{Runtime of pattern matching for $\ell = 0\dots 10^4$
    patterns on 2, 3 and 4 qubit quantum circuits from the Quartz ECC dataset, %
    for our implementation (Portmatching) and the Quartz project. %
    All $\ell = 1954$ two qubit circuits were used, whereas for 3 and 4 qubit circuits,
    $\ell = 10^4$ random samples were drawn.}
    \label{fig:benchmarks}
\end{figure}
\Cref{thm:main} shows that pattern independent matching can scale to large
datasets of patterns but imposes some restrictions on the patterns and embeddings
that can be matched.
In this section we discuss these limitations and
give empirical evidence that the pattern matching approach we have presented
can be used on a large scale, outperforming existing solutions.

\paragraph{Pattern limitations.}
In \cref{sec:assumptions}, we imposed conditions on the pattern embeddings
in order to obtain a complexity bound for pattern independent matching.
We argued how these restrictions are natural for applications in quantum computing and most
of the arguments will also hold for a much broader class of computation graphs.

In future work, it would nonetheless be of theoretical interest to explore the importance
of these assumptions and their impact on the complexity of the problem.
As a first step towards a generalisation, our implementation
and all our benchmarks in this section do not make any of these simplifying assumptions.
Our results below give empirical evidence that 
a significant performance advantage can be obtained regardless.

\paragraph{Implementation.}
We provide an open source implementation in Rust of pattern independent matching
using the results of \cref{sec:toy}, described in more detail in \cref{app:portmatching}.
The implementation works for weighted or unweighted port graphs,
and makes none of the simplifying assumptions employed in the theoretical analysis.

\paragraph{Benchmarks.}
To assess practical use, we have benchmarked our implementation against a
leading C++ implementation of pattern matching for quantum circuits
from the Quartz superoptimiser project~\cite{quartz}.
Using a real world dataset of patterns obtained by the Quartz equivalence classes
of circuits (ECC) generator, we measured the pattern matching runtime on a
random subset of up to \num{10000} patterns.
We considered circuits on the $T, H, CX$ gate set with up to 6 gates and 2, 3 or 4 qubits.
Thus for our patterns we have the bound $d \leq 6$ for the maximum depth and width $w = 2,3,4$.
In all experiments the graph $G$ subject to pattern matching was \texttt{barenco\_tof\_10} input, i.e.
a 19 qubit circuit input with 674 gates obtained by decomposing a 10-qubit Toffoli gate using the
Barenco decomposition~\cite{barenco}.
The results are summarised in \cref{fig:benchmarks}.
For $\ell = 200$ patterns, our proposed algorithm is $3\times$ faster than Quartz,
scaling up to $20\times$ faster for $\ell=10^5$.

We also provide a more detailed scaling analysis of our implementation
by generating random sets of \num{10000} quantum circuits with 15 gates
for qubit numbers between $w=2$ and $w=10$, using the previous gate set;
the results are shown in \cref{fig:scaling}.
From \cref{thm:main}, we expect that the pattern matching runtime is
upper bounded by a $\ell$-independent constant.
Runtime seems indeed to saturate for $w=2$ and $w=3$ qubit patterns, 
with an observable runtime plateau at large $\ell$.
From the exponential $c^w$ dependency in \cref{eq:mainruncompl}, it is however to be expected
that this upper bound increases rapidly for qubit counts $w \geq 4$.
A runtime ceiling is not directly observable at this experiment size but 
the gradual decrease in the slope
of the curve is consistent with the existence of the $\ell$-independent upper bound predicted
in \cref{thm:main}.

\begin{figure}[t]
    \centering
    \includegraphics[width=0.55\textwidth]{imgs/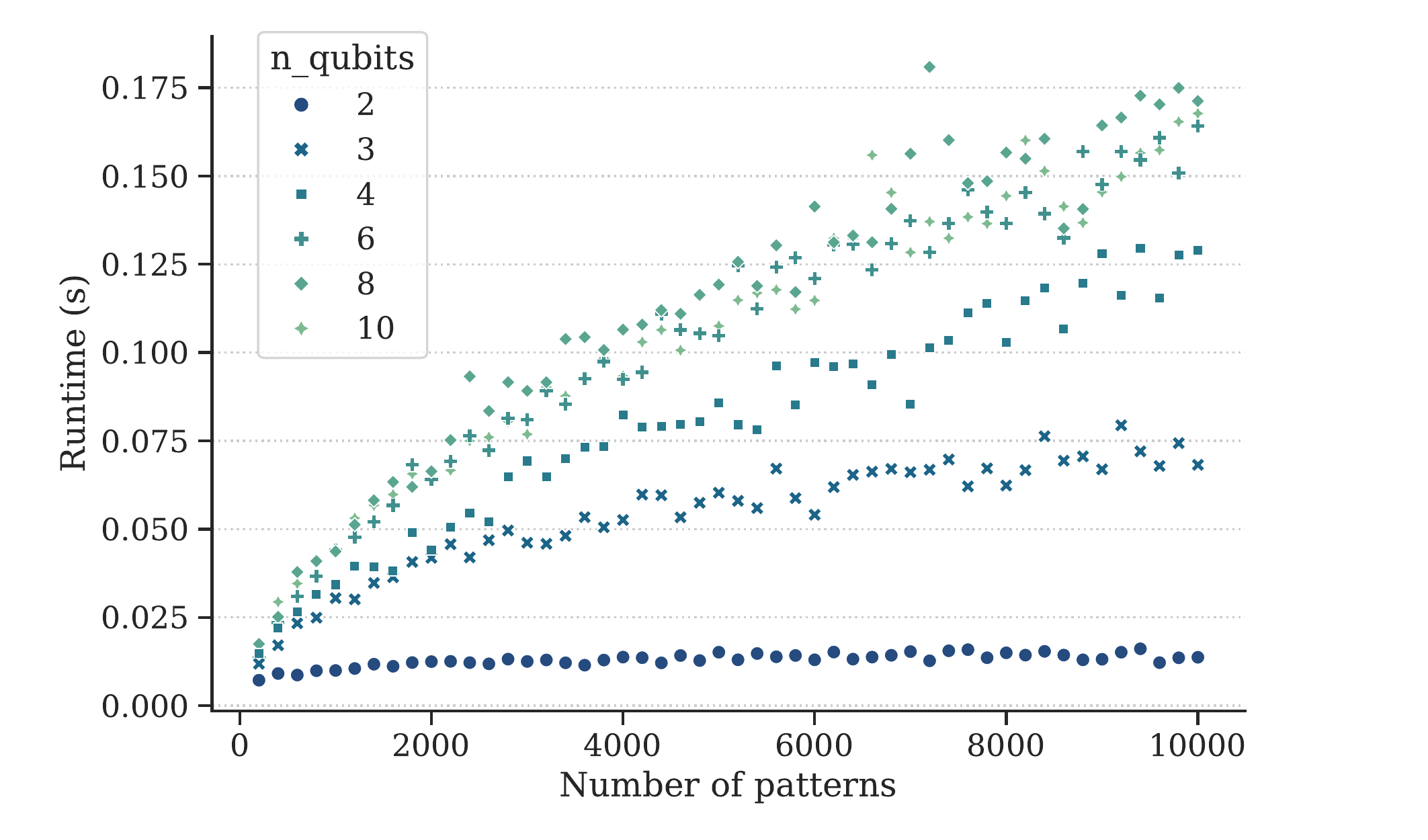}
    \caption{Runtime of our pattern matching for random quantum circuits with up to 10 qubits.}
    \label{fig:scaling}
\end{figure}

\section{Conclusion}
We have demonstrated that pattern matching on port graphs can be done in a runtime asymptotically independent of
the number of patterns by pre-computing an automaton-like data structure.
As a result, we obtain a provable computational advantage in the regime
of numerous low-width patterns.
This opens up promising avenues for graph rewriting and particularly for the optimisation of computation
graphs and quantum circuits.
Benchmarks further show that the approach is fast in practice.
At the scale of interest (\num{10000} pattern circuits with 3-4 qubits),
the resulting implementation of pattern matching on quantum circuits is 20x times faster than that of Quartz~\cite{quartz}, a leading quantum superoptimiser.

\bibliographystyle{eptcs}
\bibliography{refs}

\appendix

\section{Proofs}
\subsection{Proof of \cref{prop:flatgraphwidth}}\label{app:flatgraphwidth}
\flatgraphwidth*
\begin{proof}
  For any acyclic linear path $P \subseteq E^\ast$ in $G$ consider its two endpoints $v_1$ and $v_2$, i.e. the two vertices in $G$
  that are only incident to one edge in $P$ (linear paths are never empty).
  Let $p_1$ and $p_2$ be the ports where the first and last edges are attached to $v_1$ and $v_2$ respectively.
  Let $p_i' \in ports(v_i)$ such that $p_i' \sim p_i$ for $i = 1, 2$.
  By \cref{eq:linpath}, either $\lambda(v_i, p_i') \in P$ or $\lambda(v_i, p_i') = \omega$.
  By injectivity of $\lambda(v_i, \cdot)$, the first case implies $p_i = p_i'$ as we would otherwise have two edges
  $\lambda(v_i, p_i') \neq \lambda(v_i, p_i)$ in $P$ incident to $v_i$.
  We thus conclude that either $\lambda(v_i, p_i') = \omega$ or $p_i$ is in a singleton equivalence class of $\sim$ in $ports(v_i)$.

  There are $n_\omega$ pairs $(v, p) \in V(G) \times \mathcal{P}$ such that
  $\lambda(v_i, p_i') = \omega$ and $n_{\textrm{odd}}$ pairs $(v, p) \in V(G) \times \mathcal{P}$ such that
  the equivalence class of $p$ is a singleton set in $ports(v)$.
  We conclude that there can be at most $n_\textrm{odd} + n_\omega$ end ports of linear paths
  in $G$. As every linear path has two end ports and every end port must be
  distinct, the result follows.
\end{proof}

\subsection{Proof of \cref{prop:cananchors}}\label{app:cananchors}
\cananchors*
\begin{proof}
Termination: we count the number of times \textsc{ConsumePath} is called
in one execution of \textsc{CanonicalAnchors}.
The call on line 3 happens exactly once, so we can ignore it.
On the other hand, the \texttt{path} argument to \textsc{ConsumePath} will always
be distinct between any two calls on line 33: it is either a path on a previously
unseen linear path, or it is a strict subset of the \texttt{path} argument passed
to the current call.
As there are $w$ linear path with at most $d$ vertices, there is a finite number of
calls to \textsc{ConsumePath}.
The while loop on lines 14--18 pops an element from the \texttt{path} queue
at each iteration, so can only be executed a finite number of times. Thus we can
conclude that the \textsc{ConsumePath} procedure always terminates.

Correctness: \textsc{CanonicalAnchors} returns $w$ vertices:
in every call to \textsc{ConsumePath}, the only non-recursive insertion to the
list of anchors is the initialisation of the \texttt{anchors} list on line 31.
This insertion happens if and only if \texttt{unseen} is non-empty (line 14).
Using the assumption that every vertex is on at most 2 linear paths, we can furthermore
restrict ourselves to $|\texttt{unseen}| = 1$.
Thus the size of \texttt{seen\_paths} increases by one at every recursion (line 21).
The size of \texttt{seen\_paths} is bounded by $w$ and thus $w$ anchors
are added to the \texttt{anchors} list over the execution of \textsc{CanonicalAnchors}.

Let $X$ be the vertices returned by \textsc{CanonicalAnchors} as a set.
It remains to be shown that the $X$-split graph of $G$ is connected and acyclic.
A cycle $C$ in split$_X(G)$ must have edges on at least 2 distinct linear paths by the
second assumption of \cref{sec:assumptions}. Say there are $k > 2$ distinct linear paths
on the cycle. Every anchor vertex on the cycle can be on either 1 or 2 linear paths (we
assumed in \cref{sec:assumptions} that no vertex is on more than two paths).
There must be at least $k$ anchor vertices on $C$ whose two adjacent edges
that are also in $C$ are on two different linear paths---one for every ``switch'' of linear path on $C$.
However, by line 14, for every anchor there is at least one unseen linear path, so for $k$
anchors there must be at least $k+1$ linear paths in $C$, which is impossible.

For connectedness, observe that for every vertex $v$
in the split graph there is a path from the root $r$ to $v$.
In $G$, such a path is obtained by following the graph traversal implicit in the calls to \textsc{ConsumePath}:
let $\tilde{v}$ be the vertex in $G$ that when split generates $v$.
Every vertex in $G$ appears in the \texttt{path} argument to \textsc{ConsumePath}
at least once.
There is thus an anchor $a \in X$ with a path along a linear path from $a$ to $\tilde{v}$.
Applying this argument recursively, 
there is a sequence of anchors $r = a_1, a_2, \dots, a_k = \tilde{v}$ corresponding to successive calls to \textsc{ConsumePath}
such that for all $1 \leq i < k$ there is a linear path between $a_i$ and $a_{i+1}$.

We show that the path from $r$ to $\tilde{v}$ through $a_1, \dots, a_k$
is mapped onto a path in the split graph.
In other words, we need to show that the edges along the path are rewired in such a way
that adjacent edges along the path are mapped to edges adjacent to the same split vertex.
We partition the path into sections from $a_i$ to $a_{i+1}$ for $i = 1, \dots, k-1$
and consider each subpath separately.
Let $e_1, \dots, e_m$ be the edges of the subpath from $a_i$ to $a_{i+1}$.
The first edge $e_1$ on this subpath is always in the split graph as $a_i \in X$ and thus
is not split.
Every other edge, on the other hand, is on the same linear path as $e_1$.
Thus for $1 \leq j < m$,  if $e_j$ ends in port $p$ and the next edge $e_{j+1}$ starts
in port $p'$, then $p \sim p'$. Thus both edges are mapped to the same split vertex,
concluding that the path from $r$ to $v$ is also a path in the split graph.

Complexity: Note that a recursive call to \textsc{ConsumePath} (line 33) occurs if and only if the current call is adding a new element to the list of anchor vertices (line 31). 
Since we have previously established that the number of anchor vertices returned by \textsc{CanonicalAnchors} is $w$, it follows that there are at most $w$ recursive calls to \textsc{ConsumePath}. 
Therefore, to prove the runtime complexity $O(w^2\cdot d)$ of \textsc{CanonicalAnchors}, it remains to show that the execution of the body of \textsc{ConsumePath}---excluding line 33---runs in $O(w\cdot d)$.
This is straightforward to check for all lines but 18 and 26.
Line 26 is executed at most $w$ times on a single call to \textsc{ConsumePath}, and since \texttt{lp.split_at(v)} simply needs to traverse a linear path of at most length $d$, the required runtime complexity holds.
On the other hand, line 18 is executed at most $d$ times on a single call to \textsc{ConsumePath}---since that is what it takes for line 17 to pop all elements from the path.

Assuming the list of linear paths was computed in advance (in time $O(w \cdot d)$,
before the first call to \textsc{ConsumePath} in line 3) and each linear path is given a unique index $0\dots w-1$,
we can store the \texttt{seen\_paths} set and the set of linear paths for each vertex as ordered lists of linear path indices.
The corresponding set \texttt{G.linear\_paths(v)} can be stored as an attribute of \texttt{v} and be retrieved in constant time (assuming
for instance that vertices are indexed from $0$ to $|G| - 1$).
Other than that, line 18 is a set operation that can be realised in a single $O(w)$ pass over
the ordered lists \texttt{unseen\_paths} and \texttt{G.linear\_paths(v)}
since both have at most $w$ elements.\qedhere
\end{proof}
\subsection{Proof of \cref{prop:ctstrings}}\label{app:ctstrings}
\ctstrings*

\begin{proof}
The $\Rightarrow$ direction is straightforward.
By assumption, the root $r_1$ in $T_1$ is mapped to the root $r_2$ in $T_2$.
Non-root anchors on the other hand are precisely the vertices on more than one path in $T_1$ and $T_2$.
If $\varphi: T_1 \to T_2$ is an injection of
trees of the same width, then all non-root anchors of $T_1$ must be mapped to the non-root
anchors of $T_2$.
Every linear path in $T_1$ includes at least one anchor vertex.
This must be mapped by $\varphi$ to a path in $T_2$,
through the corresponding anchor vertex in $T_2$.
As $\varphi$ preserves the port labels (\cref{eq:emb}), the image path in $T_2$ must be
a subpath of a linear path of $T_2$.
As the linear paths are split and ordered starting from anchor vertices, the string
encoding of every split linear path in $T_1$ will be a prefix of the string encodings
of split linear paths in $T_2$. 

$\Leftarrow$: it suffices to show that every path from root to a vertex in $T_1$
is also a path from root to a vertex in $T_2$ and that the vertex labels coincides.
A path $P$ from root $r_1$ in $T_1$ can be partitioned into a sequence of paths
$P = P_1 \cdots P_k$, which all
start at anchors and are subpaths of linear paths of $T_1$.
These subpaths corresponds to a sequence of prefixes of $s_{\alpha_1}, s_{\alpha_k}$ in the string encoding,
which are also prefixes of $t_{\alpha_1}, \dots, t_{\alpha_k}$. 
Since the vertex labels are stored in the tuple string encoding, we know that
the end vertex of the path $P_1\cdots P_i$ coincides with the anchor of $t_{\alpha_{i+1}}$
in $T_2$.
Applying this argument recursively on the chain of linear subpaths, we conclude
that $P = P_1 \cdots P_k$ is also a path in $T_2$.
Finally, the vertex labels must coincide on the shared domain of definition, as
the string encoding coincide.
\Cref{eq:emb} can be shown to be satisfied using a similar argument to the one presented
in the proof of \cref{prop:treeinc}.
\end{proof}

\subsection{Proof of \cref{prop:allanchors}}\label{app:allanchors}
\allanchors*
\begin{proof}
Let $H \subseteq G$ be a connected subgraph of $G$ of width $w$.
We prove inductively over $w$ that 
\begin{equation}\label{eq:induchypo}
  \textsc{ConsumePath}(H, \texttt{path}, \texttt{seen\_paths}) \in
  \textsc{AllConsumePath}(G, w, \texttt{path}, \texttt{seen\_paths})
\end{equation}
for all arguments \texttt{path} and \texttt{seen\_paths}.
The statement in the proposition follows from this claim directly.

For the base case $w = 1$, 
\textsc{ConsumePath} will return \texttt{[new\_anchor]}, where \texttt{new\_anchor}
is obtained from lines 16--20 of \cref{lst:anchors}:
there is only one linear path and thus for every recursive call to \textsc{ConsumePath},
\texttt{unseen} will be empty, until \texttt{path} has been exhausted and the empty list is returned.
The definition of \texttt{new\_anchor} coincides with the one obtained from lines 20--28
of \cref{lst:extract}.
The only values of \texttt{w1}, \texttt{w2} and \texttt{w3} that satisfy the loop condition
on line 42 of \cref{lst:extract}
for $w = 1$ are $\texttt{w1} = \texttt{w2} = \texttt{w3} = 0$.
Using the base condition on lines 18--20 of \cref{lst:extract}, we 
conclude that \textsc{AllConsumePath}$(G, 1, \texttt{path}, \texttt{seen\_paths})$
returns \texttt{[[new\_anchor]]}, satisfying \cref{eq:induchypo}.

We now prove the claim for $w > 1$ by induction.
Using our simplifying assumptions, we obtain the assertion on line 32 of
\cref{lst:extract}, as documented.
For \cref{lst:anchors}, this assumption simplifies the loop on lines 34--37 to at most three
calls to \textsc{ConsumePath}
with arguments
$(H, P_{curr}, S_{curr})$,
$(H, P_\ell, S_\ell)$ and $(H, P_r, S_r)$ respectively,
where
\begin{itemize}
  \item 
  $P_{curr}$ is the value of the \texttt{path} variable after line 20,
  \item
  $P_\ell$ and $P_r$ refer to the two halves of the new linear path,
  as computed and stored in the variables \texttt{left\_path} and \texttt{right\_path} on line 28, and
  \item
  $S_{curr}, S_\ell$ and $S_r$ are the values of the \texttt{seen\_paths} variable
  after the successive updates on line 23 and two iterations of line 37.
\end{itemize}
Consider a call to \textsc{ConsumePath} (\cref{lst:anchors}) with arguments \texttt{G} = $H$
and some variables \texttt{path} and \texttt{seen\_paths}.
Let $w_{curr}, w_\ell$ and $w_r$ be the length of the values returned 
by the three recursive calls to \textsc{ConsumePath} of line 35.
As every anchor vertex reduces the number of unseen linear paths by exactly one
(using the simplifying assumptions),
it must hold that $w_{curr} + w_\ell + w_r + 1 = w$.
Thus for a call to \textsc{AllConsumePath} (\cref{lst:extract}) with arguments
\texttt{G} = $G$, \texttt{w} = $w$ and the same values for \texttt{path} and \texttt{seen\_paths},
there is an iteration of the \texttt{for} loop on line 42 of \cref{lst:extract}
such that $\texttt{w1} = w_{curr}, \texttt{w2} = w_\ell$ and $\texttt{w3} = w_r$.
The definition of \texttt{seen0} on line 38 of \cref{lst:extract} coincides with
the update to $\texttt{seen\_paths}$ on line 23 of \cref{lst:anchors};
it follows that on line 43 of \cref{lst:extract} the recursive call
$\textsc{AllConsumePath}(G, w_{curr}, P_{curr}, S_{curr})$ is executed.
From the induction hypothesis we obtain
that there is an iteration of the \texttt{for} loop on line 43 of \cref{lst:extract}
in which \texttt{anchors1} and \texttt{seen1} coincide with the
\texttt{new\_anchors} and \texttt{new\_seen\_paths} variables of the first iteration
of the \texttt{for} loop on line 34 of \cref{lst:anchors}.
In particular the value of \texttt{seen1} is equal $S_\ell$.

Repeating the argument, we obtain that there are iterations of the \texttt{for} loops
on lines 44 and 45 of \cref{lst:extract} that correspond to the second and third
calls to \textsc{ConsumePath} on line 35 of \cref{lst:anchors}.
Finally, the concatenation of anchor lists on line 47 of \cref{lst:extract} is equivalent
to the repeated concatenations on line 36 of \cref{lst:anchors} and so
we conclude that \cref{eq:induchypo} holds for $w$.
\end{proof}

\subsection{Proof of \cref{prop:allanchorscoro}}\label{app:allanchorscoro}
\allanchorscoro*
\begin{proof}
We restrict \texttt{split\_at} on line 37 to only return the first $d$
vertices on the linear path in each direction: vertices more than distance $d$ away
from the anchor cannot be part of a pattern of depth $d$.

We use the bound on the length of the list returned by calls to \textsc{AllConsumePath} of \cref{prop:catalanbound}
to bound the runtime.
We can ignore the non-constant runtime of the concatenation of the outputs of recursive calls
on line 47,
as the total size of the outputs is asymptotically at worst of the same complexity as the
runtime of the recursive calls themselves.
Excluding the recursive calls, the only remaining lines of \textsc{AllConsumePath}
that are not executed in constant time are
the \texttt{while} loop on lines 24--28 and the \texttt{split\_at} call on line 37.

Consider the recursion tree of \textsc{AllConsumePath}, i.e. the tree in which
the nodes are the recursive calls to \textsc{AllConsumePath} and the children
are the executions spawned by the nested \texttt{for} loops on line 42--48.
This tree has at most
\[C_w = \Theta\left(\frac{c^w}{w^{\sfrac32}}\right)\] leaves.
A path from the root to a leaf corresponds to a stack of recursive calls
to \textsc{AllConsumePath}.
Along this recursion path, the \texttt{seen\_paths} set is
always strictly growing (line 38) and the vertices popped from the \texttt{path}
queue on line 27 are all distinct.
\texttt{split\_at} is called once for each of the $w$ linear path that
are added to \texttt{seen\_paths}.
For each linear path two paths of length at most $d$ are traversed and returned.
Thus the total runtime of \texttt{split\_at} along a path from root to leaf
in the recursion tree is in $O(w \cdot d)$.
Similarly, the number of executions of the lines 25--28 is bound by the number
of elements that were added to a \texttt{path} queue,
as for every iteration an element is popped off the queue on line 27.
This is equal to the number of elements returned by \texttt{split\_at}, resulting
in the same complexity.
We can thus bound the overall complexity of executing the entire recursion tree
by $O(C_w \cdot w \cdot d) = O(\frac{c^w \cdot d}{w^{\sfrac12}})$.
\end{proof}

\section{Lower bound on the number of patterns}\label{app:proofellbound}
\begin{prop}\label{prop:ellbound}
    Let $N_{w,d}$ be the number of port graphs of width $w$, depth $d$ and maximum degree $\Delta \geq 4$.
    We can lower bound
    \[N_{w,d} > \left(\frac{w}{2e}\right)^{\Theta(wd)},\]
    assuming $w \leq o(2^d)$.
\end{prop}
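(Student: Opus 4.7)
The plan is to exhibit an explicit family of port graphs by realising them as depth-$d$ circuits on $w$ qubits built out of disjoint two-qubit gates at each time step; a 2-qubit gate has degree $4$, which is admissible since $\Delta \geq 4$. Assume $w$ is even (the odd case differs only by a constant factor). For each sequence $(M_1, \dots, M_d)$ of perfect matchings of $\{1, \dots, w\}$, I would construct a port graph $G(M_1, \dots, M_d)$ with $d$ layers: each layer $t$ contains one degree-$4$ vertex per pair $\{a, b\} \in M_t$, whose ports $i_1, i_2, o_1, o_2$ satisfy $i_k \sim o_k$, and consecutive layers are wired by connecting the $o$-port on qubit $a$ in layer $t$ to the $i$-port on qubit $a$ in layer $t+1$ via a canonical rule (say, smaller qubit index to port $1$). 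By \cref{prop:quantumwidth} the resulting graph has width exactly $w$ and depth exactly $d$.

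Next I would count matching sequences and bound the isomorphism fibre. The number of perfect matchings of a $w$-set is $(w-1)!! = w!/(2^{w/2}(w/2)!)$; Stirling gives $\log (w-1)!! = (w/2)\log(w/e) + O(\log w)$, so the number of sequences is $\Theta((w/e)^{wd/2})$ up to a polynomial factor. To move to isomorphism classes, observe that the linear-path decomposition is isomorphism-invariant, so any port-graph isomorphism $G(M) \cong G(M')$ must induce a single permutation $\sigma \in S_w$ of qubits applied uniformly across all layers; the temporal order of layers is further forced by the direction of edges from $o$-ports to $i$-ports. Each isomorphism class therefore contains at most $c \cdot w!$ matching sequences for some constant $c$, giving $N \geq [(w-1)!!]^d / (c \cdot w!)$.

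Taking logarithms,
\[
\log N \geq \tfrac{wd}{2}\log(w/e) - w \log(w/e) - O(\log w) = w\bigl(\tfrac{d}{2} - 1\bigr)\log(w/e) - O(\log w).
\]
The hypothesis $w \leq o(2^d)$ translates to $d = \omega(\log w)$; in particular $d \geq 3$ for large enough $w$, making $d/2 - 1 = \Theta(d)$, while $\log(w/e) = (1+o(1))\log(w/(2e))$ as $w \to \infty$. Hence $\log N \geq \Theta(wd) \cdot \log(w/(2e))$, which is exactly $N > (w/(2e))^{\Theta(wd)}$ as required.

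The principal obstacle I foresee is the fibre bound in the second step: justifying rigorously that no accidental automorphism collapses many matching sequences into the same isomorphism class. The linear-path argument handles qubit permutations and the temporal ordering is forced by edge directions, but one must also check that different internal wirings within a 2-qubit gate contribute only a bounded overhead per vertex, which can be made precise through the canonical wiring rule. An alternative that sidesteps the issue altogether is to attach vertex labels identifying the layer index $t$---permitted in the paper's framework via the optional $V \to \mathcal{W}$ label map---so that $G(M) \neq G(M')$ for $M \neq M'$ directly, recovering the full $[(w-1)!!]^d$ lower bound without any division.
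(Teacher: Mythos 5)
Your construction is a genuinely different route to essentially the same bound. The paper avoids having to reason about the isomorphism fibre at all: it composes a \emph{fixed} connected tree prefix of depth $\log_2 w$ with a variable \emph{bipartite} circuit of depth $D=d-\log_2 w$ (gates only between the first and second halves of the qubits, $(w/2)!$ choices per layer). The shared tree rigidifies the port graph: any isomorphism between two such graphs must carry the tree to itself and hence fix the qubit ordering, so distinct bipartite circuits yield distinct port graphs and the count $((w/2)!)^D$ is already a count of isomorphism classes. You instead allow arbitrary matchings per layer, get the larger raw count $((w-1)!!)^d$, and quotient by $S_w$ a posteriori. Both give $(w/(2e))^{\Theta(wd)}$. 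What the paper's gadget buys is that the fibre argument and the connectivity requirement are handled in one stroke; what your version buys is simplicity of the layer construction (no tree, no bipartite restriction).

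The fibre argument you flag as the principal obstacle is actually fine: isomorphisms preserve port labels, hence preserve linear-path (qubit) decomposition and the $i/o$ orientation, hence they send open inputs to open inputs and are determined by the induced permutation $\sigma\in S_w$; the orbit of a matching sequence under $S_w$ has size at most $w!$, which is all you need. Your canonical ``smaller index to port~1'' rule already eliminates any per-vertex wiring freedom, so the constant $c$ is 1.

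The gap you should worry about instead is \emph{connectivity}. The proposition lives in a section titled ``Lower bound on the number of patterns,'' and a pattern is by definition a flat \emph{connected} open port graph; the paper's proof explicitly uses the tree prefix to force connectivity. Your family $G(M_1,\dots,M_d)$ contains disconnected members---for instance $M_1=\dots=M_d$ gives $w/2$ disjoint chains. You would either need to prepend a connecting gadget as the paper does (which costs you $\log_2 w$ of depth and is exactly where the $w\le o(2^d)$ hypothesis enters for them), or argue that disconnected matching sequences are a negligible fraction of $((w-1)!!)^d$. The latter is true for $d$ large enough but needs to be said and proved; as written, your lower bound counts something strictly larger than the set in question. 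Your alternative of attaching layer-index vertex labels sidesteps the fibre issue but not this one, and it also quietly changes the counting problem (port graphs with a richer label map than the proposition intends), so it does not actually rescue the argument.

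One minor point shared by both proofs: $w\le o(2^d)$ gives $d-\log_2 w\to\infty$, not $d=\omega(\log w)$; the conclusion you need ($d\to\infty$, hence $d/2-1=\Theta(d)$) does follow, but the intermediate restatement is not equivalent.
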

\noindent
In the regime of interest, $w$ is small, so the assumption $w \leq  o(2^d)$ is not a restriction.
In the main text we use the bound $|P| \leq w\cdot d$ to avoid introducing the circuit depth.
The bound stated in \cref{eq:regime} is thus slightly looser.

\begin{proof}
Let $w, d > 0$ and $\Delta \geq 4$ be integers.
We wish to lower bound the number of port graphs of depth $d$, width $w$ and maximum degree $\Delta$.
It is sufficient to consider a restricted subset of such port graphs, whose size can be easily lower bounded.
We will count a subset of CX quantum circuits, i.e. circuits with only $CX$ gates, a two-qubit
non-symmetric gate.
Because we are using a single gate type, this is equivalent to counting a subset of port graphs with
vertices of degree 4.
Assume w.l.o.g that $w$ is a power of two.
We consider CX circuits constructed from two circuits with $w$ qubits composed in sequence:
\begin{itemize}
    \item \textbf{Fixed tree circuit}: A $\log_2(w)$-depth circuit that connects qubits pairwise in such a way that the resulting
    port graph is connected. We fix such a tree-like circuit and use the same circuit for all CX circuits.
    We can use this common structure to fix an ordering of the $w$ qubits, that refer to as qubits
    $1,\dots,w$.
    \item \textbf{Bipartite circuit}: A CX circuit of depth $D = d - \log_2(w)$ with exactly $\sfrac{w}{2}\cdot (d - \log_2(w))$
    CX gates, each gate acting on a qubit $1 \leq q_1 \leq \sfrac{w}{2}$ and a qubit
    $\sfrac{w}{2} < q_2 \leq w$.
\end{itemize}
The following circuit illustrates the construction:
\begin{center}
\begin{tikzpicture}
\node at (-2, 0){};
\node at (0.5, 0){Fixed tree circuit};
\node at (4.7, 0) {Variable bipartite circuit};
\node at (7, 0){};
\end{tikzpicture}
\begin{quantikz}[
  wire types={n,q,q,q,q,n,q,q,q,q},
  row sep=0.3cm,
  column sep=0.5cm,
]
  &\gate[style={
    minimum width=1.7cm,
    decorate,decoration={brace,amplitude=5pt,raise=-1ex}
  }]{}&&&&&\gate[style={
    minimum width=1.5cm,
    decorate,decoration={brace,amplitude=5pt,raise=-1ex}}]{}
  \\[2mm]
  &\ctrl{1} & \ctrl{3} & \ \ldots\ & \ctrl{4}\slice[style=black]{} &\ctrl{4}&&&&\ \cdots \\
  &\targ{} && \ \ldots\ &&&\ctrl{3}&&&\ \cdots\\
  &\ctrl{1} && \ \ldots\ &&&&\ctrl{2}&&\ \cdots \\
  &\targ{} &\targ{}&\ \ldots\ &&&&&\ctrl{1}&\ \cdots\\[-1mm]
  &\rotatebox{90}{$\cdots$}&&&\wire[d][4]{q}\rotatebox{90}{$\cdots$}&\wire[d][1]{q}\rotatebox{90}{$\cdots$}&\wire[d][2]{q}\rotatebox{90}{$\cdots$}&\wire[d][3]{q}\rotatebox{90}{$\cdots$}&\wire[d][4]{q}\rotatebox{90}{$\cdots$}&\\[-1mm]
  &\ctrl{1} & \ctrl{3} &\ \ldots\ & &\targ{} &&&&\ \cdots \\
  &\targ{} &&\ \ldots\ &&& \targ{} &&&\ \cdots \\
  &\ctrl{1} &&\ \ldots\ &&&& \targ{} &&\ \cdots  \\
  &\targ{} &\targ{}&\ \ldots\ &\targ{}&&&&\targ{}&\ \cdots
\end{quantikz}
\end{center}
All that remains is to count the number of such bipartite circuits.
Every slice of depth 1 must have $w / 2$ CX gates acting on distinct qubits.
Every qubit $1$ to $w/2$ must interact with one of the qubits $w/2+1$ to $w$,
so there are $(w/2)!$ such depth 1 slices.
Repeating this depth 1 construction $D$ times and using
Sterling's approximation, we obtain a lower bound for the number of port graphs of depth $d$,
width $w$ and maximum degree at least 4:
\[
\left(\left(\frac{w}2\right)!\right)^D > \sqrt{w\pi}\left(\frac{w}{2e}\right)^{wD/2}
= \left(\frac{w}{2e}\right)^{\Theta(w\cdot d)}
\]
where we used $w = o(2^d)$ to obtain $\Theta(D) = \Theta(d)$ in the last step.
\end{proof}

\section{Quantum circuits as port graphs}\label{app:qc-pg}
A relevant consideration when viewing quantum circuits as port graphs is 
the question of equality on circuits.
We consider two circuits to be equal if they are equal as port graphs.
This sense of equality is more general than equality of ordered lists of gates, another
common internal representation of quantum computations,
but does not account for commuting gates or gate \emph{symmetries}.
An example of a symmetric gate type is the CZ gate, a gate type of arity $n=2$
that is symmetric in its arguments
\begin{center}
\begin{quantikz}
&\ctrl{1} & \\
& \control{} &
\end{quantikz}=\begin{quantikz}
&\permute{2,1}&\ctrl{1} & \permute{2,1}& \\
& &\control{} &&
\end{quantikz}
\end{center}
that is to say, exchanging the order of the inputs and outputs does not
change the computation.
Viewed as port graphs, however, the left and right hand
side are distinct circuits
\begin{center}
    \begin{quantikz}
        &\gate[2]{CZ}\gateinput{0}\gateoutput{0} & \\
        & \gateinput{1}\gateoutput{1} &
    \end{quantikz}$\neq$\begin{quantikz}
        &\permute{2,1}&\gate[2]{CZ}\gateinput{0}\gateoutput{0} & \permute{2,1}&\\
        && \gateinput{1}\gateoutput{1} &&
    \end{quantikz}$=$\begin{quantikz}
        &\gate[2]{CZ}\gateinput{1}\gateoutput{1} & \\
        & \gateinput{0}\gateoutput{0} &
    \end{quantikz}
\end{center}
In the case that such symmetries need to be taken into account for pattern matching, there are
two simple solutions.
For rewriting purposes, one may choose to add a single rewrite rule
to express the symmetry explicitly, stating that the symmetric gate can be rewritten to itself
with the edge order reversed.
This will recover the full expressivity of the rewrite rule set, at the expense of additional rewrite
rule applications.

Alternatively, all instances of a pattern that are equivalent up to gate symmetries can be
enumerated and added as separate patterns to the matcher.
This approach is particularly appealing
as the runtime of the pattern matcher will remain unchanged, 
despite the increase in the number of patterns (exponential in the number of symmetric gates).
The trade-off is increased pre-compilation time and pattern matcher size.

\section{Properties of the Canonical Tree representation}\label{app:ctrepr}
We provide here the exact derivations of the properties of the CT
representation that we rely on, namely an injective map from the port graph representation
to CTs, invariance of the CT representation under pattern embeddings
and the string encoding of CT trees.

\paragraph{Equivalence of the CT representation.}
A connected port graph $G$ is fully defined by the set of edges, given as a set of pairs in $V \times \mathcal{P}$.
Given the CT representation of $G$ with vertices $\tilde{V}$,
alongside a map $merge: \tilde{V} \to V$ that maps the vertices of the CT
representation to the vertices of $G$, it is immediate that $G$ can be recovered by mapping every $(v, p) \in \tilde{V}\times \mathcal{P}$
to $(merge(v), p) \in V \times \mathcal P$.

Up to isomorphism in the co-domain $V$, we can store $merge$ by storing the partition of $\tilde{V}$ into sets with the same image.
We introduce for this a map $\tilde{V} \to \tilde{V}$ that maps
every vertex to a canonical representative of the partition---for instance the vertex closest
to the root in CT.
This map can be stored as vertex labels of CT, which we
can refer to as the labelled CT representation for distinction.
However in the main text, it is always the labelled representation that
is meant when CT representations are discussed.

We thus have a bijective map between the labelled CT representation of $G$ and the port graph $G$.
Furthermore, this map preserves the linear paths, i.e. it maps one to one the linear paths of the
labelled CT representation to the linear paths of $G$.
For all purposes, we can thus treat the labelled CT representation as an equivalent
representation of $G$.

\paragraph{Invariance under pattern embedding.}
Unlike graphs, rooted trees can be defined in a way that is invariant under
bijective relabelling of the vertices by using the invariant port labels.
Every tree vertex is either the root vertex or it is uniquely identified by the path to it
from the root. Since paths can be defined in terms of port labels, paths are invariant
under pattern embeddings of the underlying graph.

For trees $T$ and $T'$, let $S$ and $S'$ be the sets of their respective vertices expressed
as sequences of port labels.
We thus define tree inclusion and equality only up to vertex relabelling: $T \subseteq T'$ if and only if $S \subseteq S'$, and $T = T'$ if and only if $S = S'$.
On labelled trees, we also require inclusion (resp. equality) of the vertex label maps, including in particular the $merge$ map of labelled CT representations.
As a result, subtree relations in labelled CT representations correspond to subgraphs
of the original graph, in effect reducing the pattern matching problem
on port graphs to a problem of tree inclusion on CT representations.
This statement is formalised in \cref{prop:treeinc}.

\paragraph{String encoding of CT representations.}
In order for our string encoding of CT representations
to map tree inclusion to string prefixes,
we recall that the anchor set
in \cref{eq:treeinc} is fixed: a subtree of $T$ with the same anchor set
can only be obtained by shortening the linear paths at their ends---
the resulting subpath will always contain the anchor vertex.
Given a linear path $L$ of $T$, we thus split $L$ at the anchor on $L$ and obtain
two paths $L_1, L_2$ starting from the anchor to the ends of $L$. 
For any subtree $T' \subseteq T$, the linear path $L'$ that is a subpath of $L$
will split into $L_1', L_2'$, prefixes of $L_1$ and $L_2$ respectively.

With an appropriate string representation of CT vertices and their labels,
this will encode all linear paths.
In the same way that the $merge$ map of the labelled CT representation is used to
restore the original graph from the split CT vertices, we use it to recover the anchor
vertices from the split linear paths.
Finally, to order the linear paths in the string tuple, we use for instance
the order of their anchors induced by port ordering.


\section{Prefix Trees}\label{app:prefixmatch}
Our main result is achieved by reducing a tree inclusion problem to the following problem.
\paragraph{String prefix matching.}
Consider the following computational problem over strings.
Let $\Sigma$ be a finite alphabet and consider $\mathcal{W} = (\Sigma^*)^w$
the set of $w$-tuples of strings over $\Sigma$.
For a string tuple $(s_1, \dots, s_w) \in \mathcal{W}$ and a set of string tuples $\mathcal{D} \subseteq \mathcal{W}$,
the $w$-dimensional string prefix matching consists in finding the set
\[
    \{ (p_1, \dots, p_w) \in \mathcal{D} \ | \ \text{for all }1 \leq i \leq w: p_i\text{ is a prefix of }s_i \}.
\]
This string problem can be solved using a $w$-dimensional prefix tree.
We give a short introduction to prefix trees for the string case but refer
to standard literature for more details~\cite{taocpIII}.

\paragraph{One-dimensional prefix tree.}
Let $P_1, \dots, P_\ell \in \mathcal{A}^\ast$ be strings on some alphabet $\mathcal{A}$.
Given an input string $s\in\mathcal{A}^\ast$, we wish to find the set of
patterns $\{ P_{1 \leq i \leq \ell} | P_i \subseteq s\}$, i.e. $P_i$ is a prefix of $s$.

The prefix tree of $P_1, \dots, P_\ell$ is a tree with a tree node for each prefix of
a pattern. The children of an internal node are the strings that extend the prefix
by one character. The root of the tree is the empty string.
Each tree node also stores a list of matching patterns, with each pattern stored in the unique corresponding node.
Every prefix tree has an empty string node, which is the root of the tree.
For every inserted pattern of length at most $L$ nodes are inserted, one
for every non-empty prefix of the pattern. Thus a one-dimensional prefix tree
has at most $\ell \cdot L + 1$ nodes and can be constructed in time $O(\ell \cdot L)$.

Given an input $s \in \mathcal{A}^\ast$, we can find the set of matching patterns
by traversing the prefix tree of $P_1, \dots, P_\ell$ starting from the root.
We report the list of matching patterns at the current node
and move to the child node that is still a prefix of $s$, if it exists.
This procedure continues until no more such child exists.
In total the traversal takes time $O(|s|)$, as every character of $s$ is visited
at most once.

Note that in theory the number of reported pattern matches can dominate the runtime
of the algorithm. We can avoid this
by returning the list of matches as an iterator, stored as a list of pointers
to the tree nodes matching lists.
\paragraph{Multi-dimensional prefix tree.}
A $w$-dimensional prefix tree for $w > 1$ is defined recursively as a one-dimensional
prefix tree that at each node stores a $w-1$-dimensional prefix tree.
Given an input $w$-tuple $(s_1, \dots, s_w) \in (\mathcal{A}^\ast)^w$,
the traversal of the $w$-dimensional prefix tree is done by traversing the one-dimensional
prefix tree on the input $s_1$ until no child is a prefix of the input,
and then recursively traversing the $w-1$-dimensional prefix tree on $(s_2, \dots, s_w)$.
Similarly to the one-dimensional case, the list of matching patterns is stored at prefix tree nodes
and reported during traversal.
The traversal thus takes time $O(|s_1| + \cdots + |s_w|)$, as every character of $s$ is visited
at most once.

For $\ell$ tuples of size $w$ of words of maximum length $L$, we can bound the number of nodes
of the $w$-dimensional prefix tree by $1 + (\ell \cdot L)^w$.
The runtime and space complexity of the construction of the $w$-dimensional prefix tree
is thus in $O((\ell \cdot L)^w)$, summarised in the result:

\begin{prop}\label{prop:prefixmatch}
    Let $\mathcal{D} \subseteq \mathcal{W}$ be a set of string tuples
    and $L$ the maximum length of a string in a tuple of $\mathcal{D}$.
    There is a prefix tree with at most $(\ell \cdot L)^w + 1$ nodes
    that encodes $\mathcal{D}$ that can be used to solve
    the $w$-dimensional string prefix matching problem
    in time $O(|s_1| + \cdots + |s_w|)$.
\end{prop}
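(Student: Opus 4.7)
The plan is to prove both the size bound and the query-time bound by induction on the dimension $w$, using the recursive definition of the $w$-dimensional prefix tree set out in the paragraph preceding the statement.

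\textbf{Base case $w=1$.} I would recall the standard trie construction: start with a single root representing the empty prefix, then insert each pattern $P_i$ by descending along already-existing edges labelled by its characters and creating a new node whenever the prefix is not yet present. Each of the $\ell$ patterns contributes at most $L$ new nodes, so the size is at most $1+\ell L$. A query on input $s_1$ consists in walking down the unique chain of edges matching successive characters of $s_1$, reporting at every visited node the list of patterns ending there as part of a single output iterator (so that enumeration of the list is not charged to the traversal). Each character of $s_1$ is consumed at most once, giving $O(|s_1|)$ query time.

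\textbf{Inductive step.} Assuming the result for $w-1$, I would build the $w$-dimensional tree as follows: form a one-dimensional trie on the first coordinates $\{p_1 : (p_1,\dots,p_w) \in \mathcal{D}\}$, and at every node $v$ representing prefix $q$ attach a $(w-1)$-dimensional prefix tree for the reduced dataset $\mathcal{D}_v = \{(p_2,\dots,p_w) : (q,p_2,\dots,p_w) \in \mathcal{D}\}$. For the size bound, the top trie has at most $1+\ell L$ nodes; applying the induction hypothesis to every attached sub-tree and summing, together with the observation that $\sum_v |\mathcal{D}_v| \le \ell$, yields an overall count that simplifies to $1+(\ell L)^w$ after absorbing strictly lower-order terms into the leading power. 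For the query, I would descend the top trie on $s_1$ in time $O(|s_1|)$ and then invoke the recursive query on the attached $(w-1)$-dimensional tree with input $(s_2,\dots,s_w)$, which by induction costs $O(\sum_{i\ge 2}|s_i|)$. Adding the two contributions gives the claimed $O(\sum_{i=1}^w |s_i|)$.

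\textbf{Reporting matches at intermediate nodes.} A pattern $(p_1,\dots,p_w)$ with $p_1$ a proper prefix of $s_1$ lives at a shallower node of the top trie than the terminal node of the descent, so its lookup must also be covered. I would handle this by having each node of the top trie expose, via its stored $(w-1)$-dimensional sub-tree, an iterator of patterns consistent with $(s_2,\dots,s_w)$, and concatenate these iterators along the descent. The point is that the enumeration of the matches themselves is lazy and is not counted in the traversal bound, in keeping with the output-size convention used throughout the paper.

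\textbf{Main obstacle.} The subtle part is exactly the query-time bound: a naive recursion that launches a fresh $(w-1)$-dimensional query at every one of the $|s_1|+1$ descent nodes would yield the multiplicative estimate $O(|s_1|\cdot\sum_{i\ge 2}|s_i|)$ rather than the additive one. I would have to argue that only one real $(w-1)$-dimensional descent is performed — at the deepest node reached on $s_1$ — with the sub-tree iterators at shallower nodes being consulted only to enumerate already-stored matches, so that the recursive traversal work is not duplicated across the chain. Once this accounting is in place, the induction closes cleanly; everything else reduces to routine verification of the trie construction and the size arithmetic.
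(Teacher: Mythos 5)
Your induction on $w$ mirrors the informal construction the paper gives in the surrounding prose, and you have correctly spotted the one real subtlety: a naive recursion that launches a fresh $(w-1)$-dimensional query at every node visited while walking down $s_1$ gives $O\bigl(|s_1|\cdot\sum_{i\ge 2}|s_i|\bigr)$, not the claimed additive bound. The paper itself glosses over this (its ``proof'' is essentially the descriptive paragraph above the statement), so raising it is the right instinct.

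However, the fix you sketch does not close the gap. In your construction the sub-tree attached to a node $v$ of the top trie holds $\mathcal{D}_v=\{(p_2,\dots,p_w) : (q,p_2,\dots,p_w)\in\mathcal{D}\}$, where $q$ is exactly $v$'s string. These sub-trees at distinct nodes along the $s_1$-descent are pairwise \emph{disjoint} structures with no shared state, so there is no such thing as ``already-stored matches'' at the shallower nodes that an iterator could merely read off: which patterns of $\mathcal{D}_v$ are consistent with $(s_2,\dots,s_w)$ is a query-dependent fact that can only be determined by actually descending $v$'s sub-tree. Doing that at every visited node brings back the multiplicative cost you were trying to avoid, and doing it only at the deepest node misses every pattern whose $p_1$ is a proper prefix of $s_1$. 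As written, the induction therefore does not close.

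The construction that makes the bound work (and is the one the paper's prose implicitly assumes, since it speaks of ``the'' single recursive traversal after the $s_1$-descent terminates) is the \emph{cumulative} one: attach to node $v$ a $(w-1)$-dimensional tree for \emph{all} tuples whose first coordinate is a prefix of $v$'s string, not just those equal to it. Then the deepest node $v^*$ reached on $s_1$ already contains, in its sub-tree, every tuple of $\mathcal{D}$ with $p_1\subseteq s_1$ (any such $p_1$ labels an ancestor of $v^*$ in the top trie), so a single recursive descent at $v^*$ on $(s_2,\dots,s_w)$ suffices, and the time telescopes to $O(\sum_i|s_i|)$. You would then need to redo the size count for this cumulative variant; the bound $\sum_v|\mathcal{D}_v|\le\ell$ you used no longer holds (a tuple is now replicated in the sub-tree of every descendant of its $p_1$-node), though one still gets $O((\ell L)^w)$ by bounding each $\ell_v\le\ell$ and the number of top-trie nodes by $1+\ell L$.
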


\section{Open source implementation}\label{app:portmatching}
The code is available at \url{https://github.com/lmondada/portmatching/}.
All benchmarking can be reproduced using the tooling and instructions at \url{https://github.com/lmondada/portmatching-benchmarking}.

We represent all the pattern matching logic within a generalised finite state automaton, composed
of states and transitions.
This formalism is used to
traverse the graph input and express both
the prefix tree of the string prefix matching problem and the (implicit) recursion tree of \cref{lst:extract}
in \cref{sec:anchors}.
We sketch here the automaton definition. Further implementation details can be obtained from the \texttt{portmatching} project directly.

In the pre-computation step, the automaton is constructed based on the set of patterns
to be matched. It is then saved to the disk;
a run of the automaton on an input graph $G$
is the solution the pattern independent matching problem for the input $G$.
To run the automaton, we keep track of the set of current states, initialised to
a singleton root state and updated following allowed transitions
from one of the current states.
Which transitions are allowed is computed using predicates on the input graph
stored at the transitions.
This is repeated until no further allowed transitions exist from a current state.

At any one state of the automaton, zero, one or several transitions may be allowed
depending on the input graph.
As the automaton is run for a given input graph $G$, we keep track of the vertices that
have been matched by the automaton so far with an injective map between a set
of unique symbols and the vertices of $G$.
Vertices in this map are the known vertices of $G$.
There are three main types of transitions:
\begin{itemize}
  \item A \textbf{constraint} transition asserts that a property of the known
  vertices holds. This can be checking for a vertex or edge label, or checking
  that an edge between two known vertices and ports exists.
  \item A \textbf{new vertex} transition asserts that there is an edge between
  a known vertex $v$ at a port $p$ and a new vertex at a port $p'$.
  The new vertex must not be any of the known vertices. When the transition
  is followed, a new symbol is introduced and the vertex is added to the symbol
  vertex map.
  \item A \textbf{set anchor} transition is an $\epsilon$-transition, i.e. a non-deterministic
  transition that is always allowed. Semantically, it designates a known vertex as an
  anchor.
\end{itemize}
By requiring that all constraint transitions from a given state assert mutually
exclusive predicates (such as edges starting from a given vertex and port, or
the vertex label of a given vertex), we can ensure that constraint transitions
are always deterministic.
New vertex transitions are also deterministic in finite depth patterns~\footnote{%
In cyclic and non-convex cases, it can happen that a vertex is both a known vertex of a large %
pattern and a new vertex within a smaller subpattern.}, so that
in the regime explored in this paper, the only source of non-determinism
is the choice of anchors.
Intuitively, this corresponds to the facts that the prefix tree traversal
of \cref{sec:toy} is deterministic while the anchors enumeration of \cref{lst:extract}
returns a multitude of options to be explored exhaustively.

To obtain a set of matching patterns from a run of the automaton, we store pattern matches
as lists at the automaton states.
When a state is added to the set of current states, its list of matches are added to the output.
To build the automaton, we consider one pattern at a time, convert it into a chain of
transitions of the above types that is then added to the state transition graph.
At the target state of the last
transition, we then add the pattern ID to the list of matched patterns.
\end{document}